\newtheorem{definition}{Definition} 
\newtheorem{example}{Example} 
\newtheorem{proposition}{Proposition} 
\newtheorem{theorem}{Theorem} 
\newcommand{\AS}{\mbox{\it AS}}
\newcommand{\marker}{\ \uparrow\ }
\newcommand{\pop}{\mbox{\sf pop}}
\newcommand{\push}{\mbox{\sf push}}
\newcommand{\createempty}{\mbox{\sf empty}}
\newcommand{\contents}{\mbox{\sf contents}}
\newcommand{\eval}[2]{(#1,#2)}
\newcommand{\exec}{\mbox{\sf Exec}}
\newcommand{\emptysubs}{\mbox{\sf id}}
\newcommand{\short}[1]{}
\newcommand{\finalshort}[1]{}
\newcommand{\N}{\mbox{$I\!\!N$}}
\newcommand{\selatpar}[1]{A_{R#1}}
\newcommand{\selat}{A_{R}}
\newcommand{\cR}{{\cal R}}
\long\def\comment#1{}
\def\tuple#1{\langle{#1}\rangle}
\newcommand{\gd}[0]{~\rule{0.5mm}{2.5mm}~}
\newcommand{\ciao}{\texttt{Ciao}}
\newcommand{\ciaopp}{\texttt{CiaoPP}}
\newcommand{\secbeg}{}
\newcommand{\secend}{}
\newcommand{\select}{\cR}
\newcommand{\simplify}{\mathit{simplify}}
\begin{document}

\title[Efficient Local Unfolding with Ancestor Stacks] 
{Efficient Local Unfolding with Ancestor Stacks\thanks{A preliminary
    version of this work appeared in the Post-proceedings of
    LOPSTR'04, LNCS 3573, Springer-Verlag, 2005.}
}
 
\author[G. Puebla and E. Albert and M. Hermenegildo]
{GERM\'AN PUEBLA$^{1}$ \and ELVIRA ALBERT$^{2}$ \and MANUEL HERMENEGILDO$^{1,3}$ \vspace*{3mm} \\ 
$^{1}$School of Computer Science, Technical University of Madrid\\
E28660-Boadilla del Monte, Madrid, Spain. E-mail: {\tt
  german@fi.upm.es}, {\tt herme@fi.upm.es}\vspace*{2mm} \\
$^{2}$School of Computer Science, Complutense University of Madrid \\ 
E28040-Profesor Jos\'e Garc\'{\i}a Santesmases, s/n, Madrid, Spain. E-mail: {\tt
  elvira@sip.ucm.es} \vspace*{2mm} \\
$^{3}$Madrid Institute for Advanced Studies in Software Development Technology,\\
IMDEA Software. E-mail: {\tt manuel.hermenegildo@imdea.org}
}



\setcounter{page}{1}

\maketitle

\label{firstpage}
\maketitle
\begin{abstract} 
  
  The
  most successful \emph{unfolding rules} used nowadays
  in the partial evaluation of logic programs 
  are based on 
\emph{well quasi orders} (wqo)
  applied over (covering) \emph{ancestors}, i.e., a subsequence of the
  atoms selected during a derivation.  Ancestor (sub)sequences are
  used to increase the specialization power of unfolding while still
  guaranteeing termination and also to reduce the number of atoms for
  which the  wqo has to be checked.  Unfortunately, maintaining
  the structure of the ancestor relation during unfolding introduces
  significant overhead.
  We propose an efficient, practical \emph{local} unfolding rule based
  on the notion of covering ancestors which can be used in combination
  with a wqo and allows a stack-based implementation without losing
  any opportunities for specialization. Using our technique, certain
  non-leftmost unfoldings are allowed as long as local unfolding is
  performed, i.e., we cover depth-first strategies.
To  deal with practical programs, we propose assertion-based
techniques which allow our approach to treat programs that 
include (Prolog) built-ins and external predicates in a very
extensible manner, for the
  case of leftmost unfolding.
  %
  Finally, we report on our implementation of these techniques
  embedded in a practical partial evaluator, which shows that our
  techniques, in addition to dealing with practical programs, are also
  significantly more efficient in time and somewhat more efficient in
  memory than traditional tree-based implementations.
To appear
in Theory and Practice of Logic Programming (TPLP).

  %

\submitted {25 October 2005}
 \revised {3 March 2006, 2 July 2009, 14 October 2009}
 \accepted{17 November 2009}

\begin{keywords}
  Partial Evaluation, Partial Deduction, Logic Programming, Prolog,
  SLD semantics, Local Unfolding.
\end{keywords}
\vspace*{-3mm}
\end{abstract}

\secbeg
\section{Introduction}
\label{sec:introduction}
\secend

The main purpose of \emph{partial evaluation} (see \cite{pevalbook93}
for a general text on the area) is to specialize a given program
w.r.t.\ part of its input data---hence it is also known as
\emph{program specialization}.  Essentially, partial evaluators are
non-standard interpreters which evaluate expressions while enough
information is available and residualize them otherwise.  The partial
evaluation of logic programs is usually known as \emph{partial
  deduction} \cite{Lloyd:jlp91,gallagher:pepm93}.  Informally, the
partial deduction algorithm proceeds as follows. Given an input
program and a set of atoms, the first step consists in applying an
\emph{unfolding rule} to compute finite (possibly incomplete) SLD
trees for these atoms. This step returns a set of \emph{resultants}
(or residual rules), i.e., a program, associated to the root-to-leaf
derivations of these trees.  Then, an \emph{abstraction operator} is
applied to properly add the atoms in the bodies of resultants to the
set of atoms to be partially evaluated. The abstraction phase yields a
new set of atoms, some of which may in turn need further evaluation
and, thus, the process is iteratively repeated while new atoms are
introduced.
%
The number of such new atoms which can be introduced can in general be
unbounded.
%
  The termination of the partial deduction process is ensured by two
  control issues.  Following the terminology of
  \cite{gallagher:pepm93}, the so-called \emph{local} control defines
  an unfolding rule which determines how to construct finite SLD
  trees. The \emph{global} control defines an abstraction operator
  which guarantees that the number of new atoms is kept finite.
  Termination of the partial deduction algorithm involves ensuring
  termination both at the local and global levels. We refer to
  \cite{LeuschelBruynooghe:TPLP02} for a survey on both control
  issues.
  This article is centered on the local control, namely on the
  development of a practical, efficient unfolding rule. The techniques
  we will propose for local control can be used in combination with
  any global control strategy.


  We believe that two factors limiting the general uptake of
  partial deduction are: 1) the
  relative inefficiency of the partial deduction method, and 2) the
  complications brought about by the treatment of real programs.
  Indeed, the integration of powerful strategies in the unfolding rule
  ---like the use of wqos combined with the
  ancestor relation--- can introduce a significant cost both in time
  and memory consumption of the specialization process. Regarding the
  treatment of real programs which include external predicates,
  non-declarative features, etc., the complications range from how to
  identify which predicates include these non-declarative features
  (ad-hoc but difficult to maintain tables are often used in practice
  for this purpose) to how to deal with such predicates during partial
  deduction.  Also, the optimal treatment of these predicates during
  partial deduction often requires information which can only be
  available at partial deduction time if a global analysis of the
  program is performed.  
  Our main objective in this work is to propose some novel solutions
  to these issues.
  
  State-of-the-art partial evaluators integrate terminating unfolding
  rules for local control based on wqos, like
  homeomorphic embedding \cite{Kru60,LeuschelBruynooghe:TPLP02} which can
  obtain very powerful optimizations.
  Moreover, they allow performing the ordering comparisons over
  \emph{subsequences} of the full sequence of the selected atoms.
  In particular, the use of \emph{ancestors} for refining sequences of
  visited atoms, originally proposed in~\cite{BSM92}, greatly improves
  the specialization power of unfolding while still guaranteeing
  termination and also reduces the length of the sequences for which
  the embedding order for the new atoms has to be checked.
\short{ However, although an important amount of effort has been
  devoted to improving the implementation of ancestors-based
  techniques (see, e.g.,~\cite{MartensDeSchreye:jlp95b}) current
  state-of-the art partial deduction systems (e.g., the {\sf Ecce} partial deduction
  system or the Curry partial evaluator \cite{ElviraHanusVidal02JFLP})
  are still not very efficient when local control based on the
  combination of wqo and ancestors is used.
} Unfortunately, having to maintain dependency information for the
individual atoms in each derivation during the generation of SLD trees
has turned out to introduce overheads which seem to cancel out the
theoretical efficiency gains expected.
%
In order to address this issue,
in this article,
%
we introduce \emph{ASLD resolution} as the basis for an 
efficient, stack-based implementation technique of a local
unfolding rule relying on the notion of covering ancestors.  Our
technique can significantly reduce the overhead incurred by the use of
covering ancestors without losing any opportunities for
specialization. We outline as well a generalization that allows
certain non-leftmost unfoldings with the same assurances.

In order to deal with real programs that include (Prolog) built-ins
and external predicates, we extend ASLD resolution 
and the ancestor-based local unfolding rule 
to handle these 
predicates by relying on assertion-based techniques
\cite{assert-lang-disciplbook-short}. The use of assertions provides
\emph{extensibility} in the sense that users and developers of partial
evaluators can deal with new external predicates during partial
evaluation by just adding the proper assertions to these predicates
---without having to maintain ad-hoc tables or modifying the partial
evaluator itself.
%
\short{
 which
allow our approach to deal with real programs that include (Prolog)
built-ins and external predicates, for the case of leftmost unfolding.
}
%
We report on an implementation of our technique in a practical,
state-of-the-art partial evaluator, embedded in a production compiler
which uses assertions and global analysis extensively (the \ciao\ 
compiler \cite{ciao-manual-1.10-short} and, specifically, its
preprocessor \ciaopp\ \cite{ciaopp-sas03-journal-scp}).  We believe
that our experimental results  provide evidence that
our technique pays off in practice and can thus contribute to the
practicality of state-of-the-art partial evaluation techniques.

An important observation is that the techniques that we propose in this
article to control the unfolding process are useful in the context of
\emph{online} partial evaluation.  Traditionally, two approaches to
partial evaluation have been considered, \emph{online} and
\emph{offline} partial evaluation (see
\cite{LCBV04,LeuschelBruynooghe:TPLP02}).
In online partial evaluation all control decisions are taken on the
fly during the specialization phase, by keeping track of the
specialization history (e.g., the ancestor subsequences). 
In the offline approach, all control decisions are taken before the
specialization phase proper. These control decisions are based on
abstract descriptions of the data instead of the actual data. The
control strategy is usually represented as program annotations which
are the sole decision criteria for control of the partial
evaluator. For instance, regarding local control, an annotation can
explicitly indicate that an atom should not be unfolded.  Regarding
global control, annotations typically specify for each call which
arguments have to be generalised away (i.e., replaced by variables).
Such annotations are generated automatically in some partial
evaluators by a \emph{binding-time
  analysis}~\cite{DBLP:conf/lopstr/CraigGLH04}, while in other partial
evaluators they are manually provided by the user, either in part or
in full.  The advantages of the offline approach are that, once all
control annotations are available, partial evaluation is quite simple
and efficient. On the other hand, online partial evaluation while
usually less efficient, it tends to have more powerful control
strategy since control decisions are based on actual data instead of
abstract descriptions of data.  In principle, one could argue that
both approaches are equally powerful (see
\cite{DBLP:journals/toplas/ChristensenG04}) and that the offline
approach can be more appropriate if the output of a global program
analysis is available, while online partial evaluators usually
only consider local, runtime information.
In this work, we are interested in proposing novel
techniques which help improve the efficiency of online partial
evaluation.

The structure of the article is as follows.
Section~\ref{sec:background} presents some required background on
local control during partial deduction.
Section~\ref{sec:usefulness-ancestors} shows by means of an example
why using ancestors is needed.  Section~\ref{sec:an-effic-impl}
presents ASLD resolution as the basis for an efficient unfolding rule
based on ancestors which allows a stack-based implementation.
Section~\ref{sec:an-unfolding-rule} extends the unfolding techniques
to the case of external predicates.
Section~\ref{sec:experiments} presents some experimental results which
compare the performance of different unfolding strategies with several
implementations.  Finally, Section~\ref{sec:disc-future-work}
discusses some related work and concludes.

\secbeg
\section{Background}
\secend
\label{sec:background}

We assume some basic knowledge on the terminology of logic
programming. See for example~\cite{Lloyd87} for details.

Very briefly, an \emph{atom} $A$ is a syntactic construction of the
form $p(t_1,\ldots,t_n)$, where $p/n$, with $n\geq 0$, is a predicate
symbol and $t_1,\ldots,t_n$ are terms. The function $pred$ applied to
atom $A$, i.e., $pred(A)$, returns the predicate symbol $p/n$ for $A$. A
\emph{clause} is of the form $H\leftarrow B$ where its head $H$ is an
atom and its body $B$ is a conjunction of atoms. A \emph{definite
  program} is a finite set of clauses. A \emph{goal} (or query) is a
conjunction of atoms.

We denote by $\{X_1 \mapsto t_1,\ldots, X_n \mapsto t_n\}$ the
\emph{substitution} $\sigma$ with
$\sigma(X_i) = t_i$  for $i=1,\ldots,n$ (with $X_{i}\neq 
X_{j}$ if  $i\neq 
j$), and $\sigma(X) = X$ for all other variables $X$.
Given an atom $A$, $\theta(A)$ denotes the application of substitution
$\theta$ to $A$. Given two substitutions $\theta_1$ and $\theta_2$, we
denote by $\theta_1\theta_2$ their composition. 
The identity substitution is denoted by $id$.

A term $t'$ is an \emph{instance} of $t$ if there is a substitution
$\sigma$ with $t' = \sigma(t)$.

\secbeg
\subsection{Basics of partial deduction}
\secend
\label{sec:basics-part-deduct}

The concept of \emph{computation rule} is used to select an atom
within a goal for its evaluation.

\begin{definition}[computation rule] A \emph{computation rule} is a
  function
  $\select$ from goals to atoms.  Let $G$ be a goal of the form
  $\leftarrow A_1,\ldots,\selat,\ldots,A_k$, $k\geq 1$.
%
  If $\select(G)=$$\selat$ we say that $\selat$ is the \emph{selected} atom
  in $G$.
  
\end{definition}
The operational semantics of definite programs is based on
derivations.
\begin{definition}[derivation step]
\label{def:der-step}
Let $G$ be $\leftarrow A_1,\ldots,\selat,\ldots,A_k$. Let $\select$ be a
computation rule and let $\select(G)=$$\selat$. Let $C=H \leftarrow
B_1,\ldots,B_m$ be a renamed apart clause in $P$. Then $G'$ is
\emph{derived} from $G$ and $C$ via $\select$ if the following conditions
hold:
  \begin{eqnarray*}
\theta=mgu(\selat,H) \\
G' \mbox{ is the goal } \leftarrow
  \theta(B_1,\ldots,B_m,A_1,\ldots,\selatpar{-1},\selatpar{+1},\ldots,A_k)
  \end{eqnarray*}
\end{definition}
The definition above differs from standard formulations (such as that
in~\cite{Lloyd87}) in that the atoms newly introduced in $G'$ are not
placed in the same position where the selected atom $\selat$ used to
be, but rather they are placed to the left of any atom in $G$. For
definite programs, this is correct since goals are conjunctions, which
enjoy the commutative property.  This modification will become
instrumental to the operational semantics we propose in forthcoming
sections. This is not true though for programs with extra logical
predicates, as we will discuss in Section~\ref{sec:an-unfolding-rule}.
Also, it is well-known that changing the atom's positions might not
preserve finite failure.  Although our general notion of resolution
allows reordering the atoms, in a practical system, we can allow only
leftmost unfolding and still obtain significant improvements (as
will be explained in Section 5).

As customary, given a program $P$ and a goal $G$, an \emph{SLD
  derivation} for $P\cup\{G\}$ consists of a possibly infinite
sequence $G=G_0, G_1, G_2,\ldots$ of goals, a sequence
$C_1,C_2,\ldots$ of properly renamed apart clauses of $P$, and a
sequence of \emph{computed answer substitutions}
$\theta_1,\theta_2,\ldots$ (or mgus) such that each $G_{i+1}$ 
is derived from $G_i$ and $C_{i+1}$ using $\theta_{i+1}$. If $G_i$ is
of the form $\leftarrow A_1,\ldots,\selat,\ldots,A_k$ and
$G_{i+1}\equiv
\theta(B_1,\ldots,B_m,A_1,\ldots,\selatpar{-1},\selatpar{+1},\ldots,A_k)
$ is derived from $G_i$ (as stated in Definition~\ref{def:der-step}),
we say that each atom $\theta(A_i)$ with
$i=1,\ldots,\sc{R}-1,\sc{R}+1,\ldots,k$ is the \emph{instance
  originating} from $A_i$.  Finally, we say that the SLD derivation is
composed of the \emph{subsequent} goals $G_0,G_1,G_2,\ldots$.

A derivation step can be non-deterministic when $\selat$ unifies with
several clauses in $P$, giving rise to
several possible SLD derivations for a given goal.  Such SLD
derivations can be organized in \emph{SLD trees}.
A finite derivation $G=G_0, G_1, G_2,\ldots, G_n$ is called
\emph{successful} if $G_n$ is empty.  In that case
$\theta=\theta_1\theta_2\ldots\theta_n$ is called the \emph{computed answer}
for goal $G$.  Such a derivation is called \emph{failed} if it is not
possible to perform a derivation step with $G_n$.

In order to compute a partial deduction \cite{Lloyd:jlp91}, given an
input program and a set of atoms, the first step consists in
applying an \emph{unfolding rule} to compute finite (possibly
incomplete) SLD trees for these atoms. Then, a set of
\emph{resultant}s or residual rules are systematically extracted from
the SLD trees.\footnote{Let us note that the definition of a partial
  deduction \emph{algorithm} requires, in addition to an unfolding rule, the
  so-called global control level (see
  Section~\ref{sec:introduction}).}
\begin{definition}[unfolding rule] Given an atom $A$, 
  an \emph{unfolding rule} computes a set of finite SLD derivations
  $D_1,\ldots,D_n$ (i.e., a possibly incomplete SLD tree) of the form
  $D_i=A,\ldots,G_i$ with (a composed) computed answer substitution $\theta_i$ for
  $i=1,\ldots,n$ whose associated \emph{resultants} are $\theta_i(A)
  \leftarrow G_i$.
 \end{definition}
 A \emph{partial evaluation} for the initial atom is then
 defined as the set of resultants, i.e., a program, associated to the
 root-to-leaf derivations for the computed SLD tree.  The partial
 evaluation for a set of atoms is defined as the union of the partial
 evaluations for each atom in the set.  We refer to
 \cite{LeuschelBruynooghe:TPLP02} for details.  \secbeg

\subsection{Termination of local control}
\label{sec:term-local-contr}
\secend

In order to ensure the local termination of the partial deduction algorithm while
producing useful specializations, the unfolding rule must incorporate
some non-trivial mechanism to stop the construction of SLD trees.
Nowadays, 
well-quasi orderings (wqo)
\cite{SG95,Leuschel:SAS98} are broadly used in the context of on-line
partial evaluation techniques.  

It is well known that the use of wqos allows the definition
of \emph{admissible} sequences which are always finite.
%
%
Intuitively, a sequence of elements $s_1,s_2,\ldots$ in $S$ is called
\emph{admissible with respect to an order $\leq_S$} \cite{BSM92} iff
there are no $i < j$ such that $s_i \leq_S s_j$.  
The next
definition captures this idea.

\begin{definition}[admissible --wqo]\label{def:adm}
  Let $(A_1,\ldots,A_n)$ be a sequence of atoms and $A$ be a new atom
  to be added to the sequence. Let $\leq_S$ be a wqo. We denote by
  $Admissible(A,(A_1,\ldots,A_n),\leq_S)$, with $n\geq 0$ the truth
  value of the expression $\forall A_i,\; i\in\{1,\ldots,n\}\;:
  A \not\geq_S A_i$.
\end{definition}
Given a derivation $G_{1}, G_{2},~ \cdots, G_{n+1}$ in order to decide
whether to evaluate $G_{n+1}$ or not, we check that the selected atom
in $G_{n+1}$ is not strictly greater or equal to any previous
\emph{comparable} selected atom \cite{Leuschel:NeilFest02}. Observe
that the ancestor test is only applied on comparable atoms,
i.e., ancestor atoms with the same predicate symbol.  This corresponds
to the original notion of covering ancestors \cite{BSM92}.  Note that
$A_1,\ldots,A_n$ in the above definition refer to the selected atoms
in $G_1,\ldots,G_n$ and $A$ refers to the selected atom in $G_{n+1}$.
%

Among the 
wqo, the \emph{homeomorphic embedding} ordering
\cite{Kru60} 
has proved to be very powerful in
practice. 
We recall the definition of homeomorphic embedding, which can be found
for instance in Leuschel's work~\cite{Leuschel:SAS98}.

\newcommand{\emb}{{\scriptsize \trianglelefteq}}

\begin{definition}[$\emb$] Given two  atoms $A=p(t_1,\ldots,t_n)$
  and $B=p(s_1,\ldots,s_n)$, we say that $B$ \emph{embeds} $A$, written
  $A \;\emb \; B$, if $t_i \; \emb \; s_i$ for all $i$ s.t. $1 \leq i \leq n$.
  The embedding relation over  terms, also written $\emb$, is defined by the
  following rules:
\begin{enumerate}
\item $Y\! \; \emb  \; X$ for all variables $X,Y$.
\item $s\! \; \emb\; f(t_1,\ldots,t_n)$ if $s$
  $\emb$ $t_i$ for some $i$.
\item $f(s_1,\ldots,s_n)\! \; \emb  \; f(t_1,\ldots,t_n)$ if
  $s_i \; \emb\; t_i$ for all $i$, $1 \leq i \leq n$.
\end{enumerate}
\end{definition}

Informally, atom $t_{1}$ \emph{embeds} atom $t_{2}$ if $t_{2}$ can be
obtained from $t_{1}$ by deleting some operators, e.g., 
${\tt \underline{f}(g(\underline{A},B),
\underline{h}(\underline{C},s(\underline{D}))}$
embeds 
${\tt f(A,h(C,D))}$.


\subsection{Covering ancestors}
State-of-the-art unfolding rules allow performing ordering comparisons
over \emph{subsequences} of the full sequence of the selected atoms of
a derivation by organizing atoms in a \emph{proof tree} \cite{bruy91},
achieving further specialization in many cases while still
guaranteeing termination.  To do so, they maintain dependencies over
the selected atoms which are chosen in such a way that only a
subsequence of such selected atoms needs to be considered. The essence
of the most advanced techniques is based on the notion of
\emph{covering ancestors} \cite{BSM92}.

\begin{definition}[ancestor relation]
\label{def:ancestor-rel}
Given a derivation step and $\selat$, $B_i$, $i=1,\ldots,m$ as in
Definition~\ref{def:der-step}, we say that $\selat$ is the
\emph{parent} of the instance of $B_i$, $i=1,\ldots,m$, in the goal
and in each subsequent goal where the instance originating from $B_i$
appears. The \emph{ancestor} relation is the transitive closure of the
parent relation.
\end{definition}
The important observation is that a derivation can contain selected
subgoals which are indeed part of a different branch in the proof
tree.  

Given an atom $A$ and a derivation $D$, we denote by $Ancestors(A,D)$
the sequence of (comparable) ancestors of $A$ in $D$ as defined in
Definition~\ref{def:ancestor-rel}. It captures the dependency relation
implicit within a \emph{proof tree}.

It has been proved~\cite{BSM92} that any infinite derivation must have
at least one inadmissible \emph{covering ancestor} sequence, i.e., a
subsequence of the atoms selected during a derivation. Therefore, it
is sufficient to check the selected ordering relation $\leq_S$
over the covering ancestor subsequences in order to detect
inadmissible derivations. 

\begin{definition}[safe step]
  An SLD step is \emph{safe} with respect to a  wqo if
  the covering ancestor sequence of the selected atom is admissible
  with respect to that order.
\end{definition}
The above definition is extended to derivations as follows.
\begin{definition}[safe derivation]\label{def:safe-deriv}
  An SLD derivation is \emph{safe} with respect to a wqo if all
  covering ancestor sequences of the selected atoms are admissible
  with respect to that order.
\end{definition}
Otherwise, the SLD derivation is considered \emph{unsafe}.

\secbeg

\section{The Usefulness of Ancestors}
\label{sec:usefulness-ancestors}
\secend

\begin{figure}[t]
\begin{minipage}{0.4\textwidth}
\begin{verbatim}
qsort([],R,R).
qsort([X|L],R,R2) :- 
   partition(L,X,L1,L2),
   qsort(L2,R1,R2),
   qsort(L1,R,[X|R1]).
\end{verbatim}
\end{minipage}
 \begin{minipage}{0.55\textwidth}
 \begin{verbatim}
 partition([],_,[],[]).
 partition([E|R],C,[E|Left1],Right) :- 
    E =< C, 
    partition(R,C,Left1,Right).
 partition([E|R],C,Left,[E|Right1]) :- 
    E > C,  
    partition(R,C,Left,Right1).
 \end{verbatim}
 \end{minipage}
  \caption{A quick-sort program}
  \label{fig:qsort-prog}

\end{figure}

%

\noindent We now illustrate some of the ideas discussed so far and, specially,
the relevance of ancestor tracking, through an example.
Our running example is the program in Figure~\ref{fig:qsort-prog},
which implements the well known quick-sort algorithm, ``{\tt qsort}'',
using difference lists. Given an initial atom of the form
{\tt qsort($List$,$Result$,$Cont$)}, where \emph{List} is a
list of numbers, the algorithm returns in \emph{Result} a sorted
difference list which is a permutation of \emph{List} and such that
its continuation is \emph{Cont}. For example, for the query
{\tt qsort([1,1,1],L,[])}, the program should compute
\texttt{L=[1,1,1]}, constructing a finite SLD tree.
Notice that, in general, if the input arguments to a program
are not sufficiently instantiated, the corresponding SLD tree can be
infinite and/or contain incomplete derivations. 

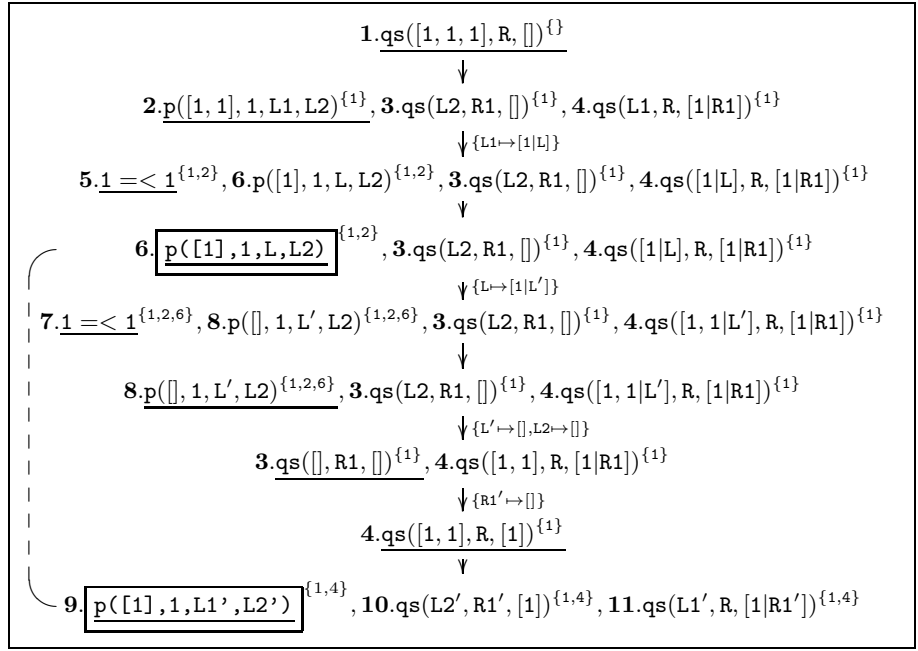
\begin{figure}[t]
\fbox{
\begin{minipage}{11.6cm}\centering
$\xymatrix@!R=3pt{
{\bf 1.}  {\tt \underline{qs([1,1,1],R,[])^{\{\}}}} \ar[d] \\
 {\tt
 {\bf 2.} \underline{p([1,1],1,L1,L2)^{\{1\}}},{\bf 3.}
 qs(L2,R1,[])^{\{1\}},
{\bf 4.} qs(L1,R,[1|R1])^{\{1\}}}
 \ar[d]^{\tt \{L1\mapsto [1|L]\}} \\
{\tt  ~~~{\bf 5.} \underline{1 =< 1}^{\{1,2\}},
 {\bf 6.} {\tt p([1],1,L,L2)}^{\{1,2\}},{\bf 3.} qs(L2,R1,[])^{\{1\}},
   {\bf 4.}  qs([1|L],R,[1|R1])^{\{1\}}}  \ar[d] \\
{\tt  ~~~ 
 {\bf 6.} \fbox{\underline{\tt p([1],1,L,L2)}}^{\{1,2\}},{\bf 3.} qs(L2,R1,[])^{\{1\}},
   {\bf 4.}  qs([1|L],R,[1|R1])^{\{1\}}}  \ar[d]^{\tt \{L\mapsto [1|L']\}} \\
{\tt   {\bf 7.} \underline{1 =< 1}^{\{1,2,6\}},
 {\bf 8.} p([],1,L',L2)^{\{1,2,6\}},{\bf 3.} qs(L2,R1,[])^{\{1\}},{\bf 4.} qs([1,1|L'],R,[1|R1])^{\{1\}}} 
   \ar[d]\\
{\tt  
 {\bf 8.} \underline{p([],1,L',L2)^{\{1,2,6\}}},{\bf 3.} qs(L2,R1,[])^{\{1\}},{\bf 4.} qs([1,1|L'],R,[1|R1])^{\{1\}}} 
   \ar[d]^{\tt \{L'\mapsto [],L2\mapsto []\}}\\
{\tt  {\bf 3.} \underline{qs([],R1,[])^{\{1\}}},{\bf 4.} qs([1,1],R,[1|R1])^{\{1\}} }\ar[d]^{\tt \{R1'\mapsto []\}}\\
{\tt  {\bf 4.} \underline{qs([1,1],R,[1])^{\{1\}}} }\ar[d]\\
 {\tt
\ar@{--}
 `l[u]`[uuuuu]{\bf 9.} \fbox{\underline{\tt
     p([1],1,L1',L2')}}}^{\{1,4\}},
{\bf 10.} {\tt qs(L2',R1',[1])^{\{1,4\}},
  {\bf 11.} qs(L1',R,[1|R1'])^{\{1,4\}}}
}$
\end{minipage}
}\caption{Derivation with Ancestor Annotations}
\label{fig:qsort-emb}

\end{figure}

Consider now Figure~\ref{fig:qsort-emb}, which presents an incomplete
SLD derivation for our quick-sort program and the query {\tt 
qsort([1,1,1],R,[])} using a leftmost unfolding rule.  For
conciseness, predicates {\tt qsort} and {\tt partition} are
abbreviated as {\tt qs} and {\tt p}, respectively in the figure.
%
Note that each atom is labeled with a number (an identifier) for
future reference\footnote{By abuse of notation, we keep the same
  number for each atom throughout the derivation although it may be
  further instantiated (and thus modified) in subsequent steps. This
  will become useful for continuing the example later.}  and a
superscript which contains the list of ancestors of that atom. 
%
%
Let us assume that we use the homeomorphic embedding order
\cite{Leuschel:SAS98} as wqo.
If we check admissibility w.r.t.\ the full sequence of atoms, i.e., we
do not use the ancestor relation, the derivation will stop when atom
number {\bf 9}, i.e., $\tt p([1],1,L',L2')$, is found for the second
time. The reason is that this atom is greater or equal to the atom
number {\bf 6} which was selected in the third step, indeed, they are
equal modulo renaming.\footnote{Let us note that the two calls to the
  builtin predicate {\tt =<} which appear in the derivation can be
  executed since the arguments are properly instantiated.  However,
  they have not been considered in the admissibility test since these
  calls do not endanger the termination of the derivation, as we will
  discuss in Section~\ref{sec:an-unfolding-rule}.}

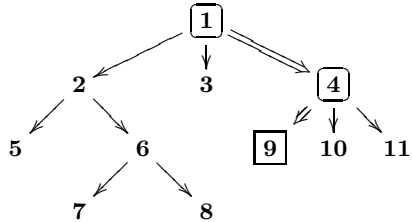
\begin{figure}[t]
\begin{center}
 $\xymatrix@!R=0.05pt@!C=0.05pt{
&  && \ovalbox{\bf 1} \ar[dll]\ar[d]\ar@{=>}[drr] \\
&  {\bf 2}\ar[dl]\ar[dr] & &{\bf 3} &&\ovalbox{\bf 4} \ar@{=>}[dl]\ar[d]\ar[dr] \\
 {\bf 5} && {\bf 6}\ar[dl] \ar[dr] & &\fbox{\bf 9} &{\bf 10} &{\bf 11}\\
& {\bf 7} & &{\bf  8} }$ 
\end{center}
  \caption{Proof tree for the example}
  \label{fig:proof-tree}
\end{figure}

This unfolding rule is too conservative, since the process
can proceed further without risking termination
(in fact, the SLD tree
for a leftmost computation rule for the example query is finite
and thus the query can safely be fully unfolded). 
The crucial point is that the execution of atom number {\bf 9} does
not depend on atom number {\bf 6} (and, actually, the unfolding of
{\bf 6} has been already \emph{completed} when atom number {\bf 9} is
being considered for unfolding). 
In order to illustrate this, consider Figure~\ref{fig:proof-tree}
which shows the proof tree associated to this derivation. Nodes are
labeled with the numbers assigned to each atom, instead of the atoms
themselves. Note that, in order to decide whether or not to evaluate
atom number {\bf 9}, it is only necessary to check that it is not
greater or equal to atoms {\bf 4} and {\bf 1}, i.e., than those which
are its \emph{ancestors} in the proof tree.  On the other hand, and as
we saw before, if the full derivation is considered instead, as in
Figure~\ref{fig:qsort-emb}, atom {\bf 9} will be compared also with
atom {\bf 6} concluding imprecisely that the derivation may not be
safe.

Despite their obvious relevance, unfortunately the practical
applicability of unfolding rules based on the notion of covering
ancestor is threatened by the overhead introduced by the
implementation of this notion.  A naive implementation of the notion
of ancestor keeps ---for each atom--- the list of its ancestors, as it
is depicted in Figure~\ref{fig:qsort-emb} by using superscripts. This
implementation is relatively efficient in time but presents a high
overhead in memory consumption. Our experiments show that the partial
evaluator can run out of memory even for simple examples. A more
reasonable implementation maintains the proof tree as a global
structure.  In a symbolic language, this greatly reduces memory
consumption but the cost of traversing the tree for retrieving the
ancestors of each atom introduces a significant slowdown in the
partial evaluation process.  We argue that our implementation
technique is efficient in time and space, overcoming the above
limitations.

\secbeg
\section{An Efficient Implementation for Local Unfolding}
\label{sec:an-effic-impl}
\secend

In this section, we first define the notion of local computation rule.
We then introduce ASLD resolution, a modification of SLD which
incorporates ancestor stacks and which is the basis of our efficient
implementation. 
Interestingly, we then impose the local condition
to the computation rule in order to ensure accurate results for ASLD
resolution.

\subsection{A local computation rule}

Our definition of \emph{local unfolding} is based on the notion of
\emph{ancestor depth}.
\begin{definition}[ancestor depth]
  Given an SLD derivation $D=G_0,\ldots,G_m$ with $G_m= \leftarrow
  A_1,\ldots,A_k$, $k\geq 1$, the \emph{ancestor depth} of $A_i$ for
  $i=1,\ldots,k$, denoted $depth(A_i,D)$ is the cardinality of the
  ancestor relation for $A_i$ in $D$.
\end{definition}
Intuitively, the ancestor depth of an atom in a goal is the depth at
which this atom is located in the proof tree associated to the
derivation.

\begin{definition}[local computation rule]
  A computation rule $\select$ is \emph{local} if $\forall
  D=G_0,\ldots,G_n$ such that
  $G_i=\leftarrow A_{i1},\ldots,A_{i{m_i}}$ for  $i=0,..,n$, it holds that
 $depth(\select(G_i),D)\geq depth(A_{ij},D)~~~ \forall  j=1,\ldots,m_i$.
\end{definition}
Intuitively, a computation rule is local if it always selects one of
the atoms which is deepest in the proof tree for the derivation. As a
result, local computation rules traverse proof trees in a depth-first
fashion, though not necessarily left to right nor in any other fixed
order. Thus, in principle, in order to implement a local computation
rule we need to record (part of) the derivation history (i.e., its
proof tree). Note that the computation rule used in most
implementations of logic programming languages, such as Prolog, always
selects the leftmost atom.  This computation rule, often referred to
as leftmost computation rule, is clearly a local computation
rule.  Selecting the leftmost atom in all goals guarantees that the
selected atom is of maximal depth within the proof tree as it is
traversed in a depth-first fashion ---without the need of storing any
history about the derivation.

It is interesting to note that we can allow more flexible computation
rules which are not necessarily local while still ensuring termination
at the cost of no accuracy assurance. A more detailed discussion on
this will appear at the end of Section~\ref{sec:accuracy-results}.

An instrumental observation in our approach is that the proof trees
which are used in order to capture the ancestor relation
can be seen as (a simplified version of) the \emph{activation
  trees}~\cite{RedDragon} used in compiler theory for representing
program executions,
%
by simply regarding selected atoms as procedure calls. The nodes in
such activation trees are \emph{activation records}, which contain
information about local variables, the current program counter, the
return address, etc. of the corresponding call. Nested subprogram
calls result in children activation records.
In the vast majority of programming languages, execution of a program
corresponds to traversing activation trees in a depth-first
fashion. Therefore, for efficiency, rather than maintaining the whole
activation tree in memory, run-time systems for execution of such
programming languages feature a \emph{call stack} where activation
records are stored. This call stack contains exactly the sequence of
activation records which are active at any point in time during the
execution.
This implementation strategy requires that new activation records be
added to the call stack as soon as a new subprogram is called and that
the top of the call stack is popped when the execution of a subprogram
returns.
%
%

Our idea then is to maintain during unfolding an \emph{ancestor
  stack}, whose elements are the ancestors of a goal, instead of a
full proof tree. The advantages of this are clear: since the ancestor
stack corresponds to a single branch in the proof tree from the
current selected atom to all its ancestors in the proof tree,
maintaining it should offer significant performance improvements both
in terms of memory and time efficiency.
As in the case of control stacks, in order to compute ancestor stacks
we need to determine exactly when each ancestor should be pushed to and
popped from the ancestors stack. 
The first part is relatively simple: any resolution step requires
pushing its associated selected atom. The second part, i.e., popping
elements from the stack, is more complicated since we need to know
when the computation of the associated call (or subprogram) is
finished. In logic programming terminology this corresponds to
determining the (partial) success states for all atoms in the
derivation. In principle, success states for individual atoms are not
observable in SLD resolution, except for the top-level query. As a
result, and as we discuss below, some changes in the operational
semantics will be needed in order to make this information explicit.

Another important observation which we exploit in this paper is that
the idea of using a stack for storing the active part of a tree does
not need to be restricted to leftmost computation and it works equally
well as long as the computation rule is local. Indeed, sibling atoms,
i.e., with the same ancestor depth, can be selected in any order and
the idea of using an ancestor stack still applies.

\secbeg
\subsection{ASLD Resolution: SLD resolution with ancestor stacks}
\label{sec:lld-resolution-with}
\secend

We now propose an easy-to-implement modification to SLD resolution as
presented in Section~\ref{sec:background} in which success states for
all internal calls are observable ---and where the control word is
available at each state. We will refer to this resolution as SLD
resolution with ancestor stacks, or \emph{ASLD} for short. The
proposed modification involves 1) augmenting goals with an
\emph{ancestor stack}, which at each stage of the computation contains
the control word of the derivation, which corresponds to \emph{the
  ancestors of the next atom which will be selected for resolution},
and 2) adding pseudo-atoms to the goals used during resolution which
mark a \emph{scope} (i.e., it separates groups of atoms which are at
different depth in the proof tree). In particular, we use the
pseudo-atom $\marker$~(read as ``pop'') to indicate the end of a depth
scope, i.e., after it we move up in the proof tree. It is guaranteed
not to clash with any existing predicate name. And its purpose is
twofold: 2.1) when a mark is leftmost in a goal, it indicates that the
current state corresponds to the success state for the call which is
now on top of the ancestor stack, i.e., the call is completed,
and the atom on top of the ancestor stack should be popped; 2.2) the
atoms within the scope of the leftmost mark have maximal ancestor
depth and thus a local unfolding strategy can be easily defined in the
presence of these pseudo-atoms. 

The following two definitions present the derivation rules in our ASLD
semantics.  Now, a state $S$ is a tuple of the form
$\tuple{G\gd{}\AS}$ where $G$ is a goal and $\AS$ is an ancestor stack
(or {\em stack} for short).
The stack will keep track of the
ancestor atoms that the new selected atoms need to be compared to
(by means of the wqo being used).  Thus the stack
will be instrumental in being able to stop a derivation as soon as
termination of the process can no longer be guaranteed by the
wqo being used.
To handle such stacks, we will use the usual stack operations:
\createempty, which returns an empty stack, \push$(\AS,Item)$, which
pushes \emph{Item} onto the stack $\AS$, and \pop$(\AS)$, which pops an
element from $\AS$. In addition, we will use the operation
\contents$(\AS)$, which returns the sequence of atoms contained in $\AS$
in the order in which they would be popped from the stack $\AS$ and
leaves $\AS$ unmodified.

\begin{definition}[derive]
 \label{def:nonpop-derive}
 Let $G=\;\leftarrow A_1,\ldots,\selat,\ldots,A_k$ be a goal with
 $A_1\neq \marker$.  Let $S=\tuple{G\gd{}\AS}$ be a state and $\AS$ be
 a stack.  Let $\leq_S$ be a wqo.  Let $\select$
 be a computation rule and let $\select(G)=$$\selat$ with $\selat\neq
 \marker$. Let $C=H \leftarrow B_1,\ldots,B_m$ be a renamed apart clause.
 Then $S'=\tuple{G'\gd{}AS'}$ is \emph{derived} from $S$ and $C$ via
 $\select$ if the following conditions hold: 
  \begin{eqnarray*}
    Admissible(\selat,\contents(\AS),\leq_S)\\
    \theta=mgu(\selat,H) \\
    G' \mbox{ is the goal } \leftarrow
    \theta(B_1,\ldots,B_m,\marker,A_1,\ldots,\selatpar{-1},\selatpar{+1},\ldots,A_k)\\
    AS'=\push(\AS,\selat)
  \end{eqnarray*}
\end{definition}
The \textbf{derive} rule behaves as the one in 
Definition~\ref{def:der-step} but in addition: i) the mark $\marker$
``pop'' is
added to the goal, and ii) a  copy of $\selat$ is pushed onto the ancestor
stack. As before, the \textbf{derive} rule is non-deterministic if
several clauses in $P$ unify with the atom $\selat$.
However, in contrast to Definition~\ref{def:der-step}, this rule can
only be applied to an atom different from $\marker$ if 1) the leftmost atom in the goal is not a $\marker$
mark, and 2) the current selected atom $\selat$ together with its
ancestors do constitute an admissible sequence. If 1) holds but 2)
does not, this
derivation is stopped and we refer to such a derivation as
\emph{inadmissible} or unsafe (see Definition~\ref{def:safe-deriv}).

\begin{definition}[pop-derive]
 \label{def:pop-derive}  Let  $G=\;\leftarrow
 A_1,\ldots,A_k$ be a goal with $A_1=\marker$.  Let
 $S=\tuple{G\gd{}\AS}$ be a state and $\AS$ be a stack.  Then
 $S'=\tuple{G'\gd{}AS'}$ with $G'=\leftarrow A_2,\ldots,A_k$ and
 $AS'=\pop(\AS)$ is \emph{pop-derived} from $S$.
\end{definition}
The \textbf{pop-derive} rule is used when the leftmost atom in the
resolvent is a $\marker$ mark. Its effect is to eliminate from the
ancestor stack the topmost atom, which is guaranteed not to belong to
the ancestors of any selected atom in any possible continuation of
this derivation.

Note that \emph{derive} steps w.r.t.\ a clause which is a fact are
always followed by a \emph{pop-derive} and thus they can be optimized
by not pushing the selected atom $\selat$ onto the stack and not
including a $\marker$~mark into the goal which would immediately pop
$\selat$ from the stack. They have been also optimized in the
implementation described in Section~\ref{sec:experiments}.  Next, we
present the following rule \textbf{derive-fact} with such an
optimization, although we do not use it for our formal developments in
Section~\ref{sec:accuracy-results}.  Indeed, its inclusion in the
semantics would require that rule \textbf{derive} is only applied if
$m>0$.
 
\begin{definition}[derive-fact]
 \label{def:nonpop-derive}
 Let $G=\;\leftarrow A_1,\ldots,\selat,\ldots,A_k$ be a goal with
 $A_1\neq \marker$.  Let $S=\tuple{G\gd{}\AS}$ be a state and $\AS$ be
 a stack.  Let $\leq_S$ be a wqo.  Let $\select$
 be a computation rule and let $\select(G)=$$\selat$ with $\selat\neq
 \marker$. Let $C= H.$ be a renamed apart fact.  Then
 $S'=\tuple{G'\gd{}AS}$ is \emph{derived} from $S$ and $C$ via
 $\select$ if the following conditions hold:
  \begin{eqnarray*}
    Admissible(\selat,\contents(\AS),\leq_S)\\
    \theta=mgu(\selat,H) \\
    G' \mbox{ is the goal } \leftarrow
    \theta(A_1,\ldots,\selatpar{-1},\selatpar{+1},\ldots,A_k)
  \end{eqnarray*}
\end{definition}

Computation for a query $G$ starts from the state
$S_0=\tuple{G\gd{}\createempty}$.  
\short{We use $S\leadsto_P S'$ to
indicate that in program $P$ a derivation step, either \textbf{derive}
or \textbf{pop-derive} can be applied to state $S$ to obtain state
$S'$. Also, $S\leadsto^*_P S'$ indicates the transitive closure of
this relation.}
Given a non-empty derivation $D$, we denote by {\em curr\_goal(D)} and
{\em curr\_ancestors}$(D)$ the goal and the stack in the last state in
$D$, respectively.  
\short{E.g., if $D$ is the derivation $S_0\leadsto^*_P
S_n$ with $S_n=\tuple{G\gd{}\AS}$ then $curr\_goal(D)=G$ and
$curr\_ancestors(D)=\AS$. } At each step of a derivation $D$ at most
one rule, either \textbf{derive}, \textbf{derive-fact} or \textbf{pop-derive}, can be
applied.

\begin{example} Figure~\ref{fig:qsort-stack} illustrates the  ASLD
  derivation corresponding to the derivation with explicit ancestor
  annotations of Figure~\ref{fig:qsort-emb}. Sometimes, rather than
  writing the atoms themselves, we use the same numbers assigned to
  the corresponding atoms in Figure~\ref{fig:qsort-emb}.
By abuse of notation, we
    again always use the same number assigned to an atom although
    further instantiation is performed. The stack contains the
    list of atoms exactly in the instantiation state they have when
    they are pushed in the stack.
 Each step has been appropriately labeled
  with the applied derivation rule. Although rule {\em
    external-derive} has not been presented yet, we can just assume
  that the code for the external predicate {\tt =<} is available and
 has the expected behavior.

 It should be noted that, in the last state, the stack contains
 exactly the ancestors of {\tt partition([1],1,L1',L2')}, i.e., the
 atoms {\bf 4} and {\bf 1}, since the previous calls to
 \texttt{partition} have already finished and thus their corresponding
 atoms have been popped off the stack.  Thus, the admissibility test
 for {\tt partition([1],1,L1',L2')} succeeds, and unfolding can
 proceed further without risking termination.
 Indeed, the derivation can be totally unfolded, which results
  in the following (optimal) partial evaluation in which all input
  data have been satisfactorily consumed 
\[{\tt
    qsort([1,1,1],[1,1,1],[]).}\]

\begin{figure}[t]
\begin{center}
\fbox{
\begin{minipage}{11.8cm}

$\xymatrix@!R=3pt{
\tuple{ \{{\tt \underline{qs([1,1,1],R,[])}}\} \gd{} []  \ar[d]^{derive}} \\
  \tuple{ \{{\bf 2,3,4,\marker}\} \gd{} [{\tt
      qsort([1,1,1],R,[])}] \ar[d]^{derive}} \\
\tuple{ \{{\bf 5,6,\marker,3,4,\marker}\}  \gd{} [{\tt
    part([1,1],1,L1,L2),qs([1,1,1],R,[])}]} \ar[d]^{external-derive}   \\
\tuple{ \{{\bf 6,\marker,3,4,\marker}\} \gd{} [{\tt
      part([1,1],1,L1,L2),qs([1,1,1],R,[])}}] \ar[d]^{derive}  \\
 \tuple{ \{ {\bf 7,8,\marker,\marker,3,4,\marker}\} \gd{} [{\tt part([1],1,L,L2),    part([1,1],1,L1,L2),qs([1,1,1],R,[])}]\ar[d]^{external-derive}} \\
\tuple{ \{{\bf 8,\marker,\marker,3,4,\marker}\} \gd{} [{\tt part([1],1,L,L2),    part([1,1],1,L1,L2),qs([1,1,1],R,[])}]\ar[d]^{derive-fact}} \\
 \tuple{ \{{\bf \marker,\marker,3,4,\marker}\} \gd{} [{\tt part([1],1,L,L2),    part([1,1],1,L1,L2),qs([1,1,1],R,[])}]\ar[d]^{pop-derive}} \\
\tuple{ \{{\bf \marker,3,4,\marker}\} \gd{} [{\tt
     part([1,1],1,L1,L2),qs([1,1,1],R,[])}]\ar[d]^{pop-derive}} \\
 \tuple{ \{{\bf 3,4,\marker}\} \gd{} [{\tt qsort([1,1,1],R,[])}]}\ar[d]^{derive-fact} \\
 \tuple{ \{{\bf 4,\marker}\} \gd{} [{\tt qsort([1,1,1],R,[])}]}\ar[d]^{derive} \\
 \tuple{ \{{\tt part([1],1,L1',L2')},{\bf 10,11,\marker,\marker}\} \gd{} [{\tt
      qsort([1,1],R,[1]),qsort([1,1,1],R,[])}] }
}$ 
\end{minipage}}
\caption{ASLD Derivation for the example}\label{fig:qsort-stack}
\end{center}
\end{figure}
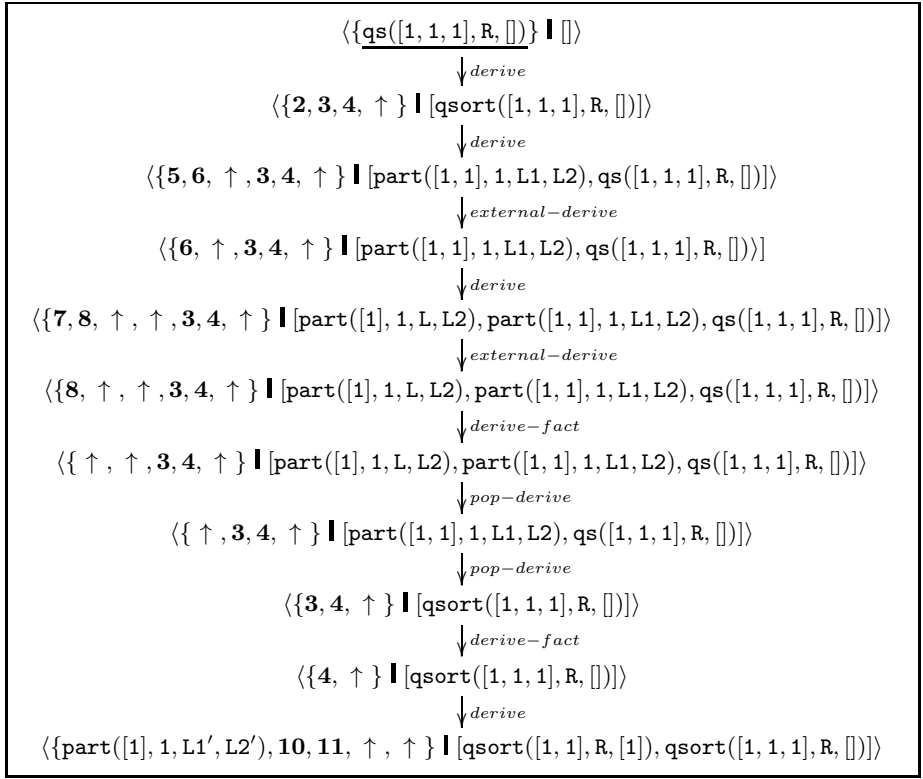
\end{example}
Finally, since the goals obtained by ASLD resolution may contain atoms
of the form $\marker$, resultants are cleaned up before being
transferred to the global control level or during the code generation
phase by simply eliminating all atoms of the form $\marker$.

It is easy to see that for each ASLD derivation $D_S$ there is a
corresponding SLD derivation $D$ with the same computed answer
and the same goal without the $\marker$~atoms.
%
Such SLD derivation is the one obtained by performing the same
\emph{derive} steps (with exactly the same clauses) using the same
computation rule and by ignoring the \emph{pop-derive} steps since
goals in SLD resolution do not contain $\marker$ atoms.  We use
$\simplify(D_S)=D$ to denote that $D$ is the SLD derivation which
corresponds to $D_S$.

\subsection{Accuracy results}\label{sec:accuracy-results}

We would now like to impose a condition on the computation rule which
allows ensuring that the contents of the stack are precisely the
ancestors of the atom to be selected. 
The following notion of \emph{depth-preserving} computation
  rule allows precisely this.

\begin{definition}[depth-preserving]\label{def:depth-pres}
  A computation rule $\select$ is \emph{depth-preserving} if for each
  non-empty goal $G=\;\leftarrow A_1,\ldots,A_k$ with
  $A_1\neq\marker$, $\select(G)=\selat$ and $\marker \notin
  \{A_2,\ldots,A_{R}\}$.
\end{definition}
Intuitively, a depth-preserving computation rule always returns an
atom which is strictly to the left of the first (leftmost) $\marker$
mark. Note that $\marker$ is used to separate groups of atoms which
are at different depth in the proof tree. Thus, the notion of
depth-preserving computation rules in ASLD resolution is
\emph{equivalent} to that of local computation rules in SLD
resolution.

\begin{proposition}[ancestor stack]
\label{propo:anc-stack}
Let $D_S$ be an ASLD derivation for the initial query $G$ in program
$P$ via a \emph{depth-preserving} computation rule. Let $D$ be an SLD
derivation such that $\simplify(D_S)=D$. If,
$curr\_goal(D_S)  =  A_1,\ldots,A_n,\marker,\ldots$
and $curr\_ancestors(D_S) = \AS$, we distinguish
two cases:
\begin{itemize}
\item if $A_1 \neq \marker$, then $\contents(\AS)=Ancestors(A_i,D)$
  for $A_i \neq \marker \mbox{ for } i=1,\ldots,n$,

\item if $A_1 =\marker$, then the atom on the top of $\AS$ has no
  descendents in $curr\_goal(D_S)$ and
  $\contents(\pop(\AS))=Ancestors(A_i,D)$ for $A_i \neq \marker \mbox{
    for } i=2,\ldots,n$.
\end{itemize}

\end{proposition}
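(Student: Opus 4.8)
The claim relates the contents of the ancestor stack $\AS$ in an ASLD derivation $D_S$ to the set of covering ancestors (as defined by Definition~\ref{def:ancestor-rel}) of the leftmost atoms of $curr\_goal(D_S)$ in the corresponding SLD derivation $D=\simplify(D_S)$. The natural approach is induction on the length of $D_S$, proving an invariant that is slightly stronger than the bare statement so that the induction goes through cleanly. The invariant I would maintain is this: at every state $\tuple{G\gd{}\AS}$, if we write $G=\leftarrow A_1,\ldots,A_n,\marker,\ldots$ where $A_1,\ldots,A_n$ are the atoms up to (but not including) the first $\marker$, then $\contents(\AS)$ is exactly the ancestor sequence $Ancestors(A_i,D)$, common to all such $A_i$, provided $A_1\neq\marker$; and the two displayed cases in the proposition are then immediate consequences (the $A_1=\marker$ case being obtained by first popping the top of $\AS$, which the \textbf{pop-derive} rule is about to do). The key auxiliary fact to establish alongside is that \emph{all} the atoms $A_1,\ldots,A_n$ sitting to the left of the first $\marker$ share the same ancestor sequence in $D$, i.e.\ they are exactly the siblings of maximal ancestor depth introduced together by the most recent \textbf{derive} step.

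\textbf{Base case.} The computation starts from $S_0=\tuple{G\gd{}\createempty}$ with $G$ the initial query containing no $\marker$. Here $\contents(\AS)$ is the empty sequence, and by Definition~\ref{def:ancestor-rel} the atoms of the top-level query have no ancestors, so $Ancestors(A_i,D)=\emptygoal$ for each, matching the (empty) stack.

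\textbf{Inductive step.} Suppose the invariant holds at state $S$ and we perform one ASLD step to reach $S'$. There are three rules to consider. For \textbf{pop-derive} (so $A_1=\marker$), the top atom of $\AS$, call it $B$, is popped and the leading $\marker$ is stripped; I must argue that $B$ has no descendents among the atoms now exposed at the front of $G'$, so that removing it from the stack correctly restores the ancestor sequence of those atoms. This is where the scope discipline of the $\marker$ marks does the work: a $\marker$ is inserted by \textbf{derive} precisely to delimit the body atoms of the resolved clause, so reaching a leftmost $\marker$ signals that the call whose atom is on top of the stack has completed and all its body-descendents have been consumed. For \textbf{derive} on selected atom $\selat$ with clause body $B_1,\ldots,B_m$, the new goal places $\theta(B_1,\ldots,B_m)$ followed by a fresh $\marker$ at the front; by Definition~\ref{def:ancestor-rel} each $B_j$ has $\selat$ as parent, so $Ancestors(B_j,D)=\selat :: Ancestors(\selat,D)$, which is exactly $\contents(\push(\AS,\selat))=\contents(AS')$ by the induction hypothesis. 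The \textbf{derive-fact} case (where $m=0$) leaves $\AS$ unchanged and introduces no new atoms nor $\marker$, so the invariant is preserved trivially for whatever atoms remain exposed at the front.

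\textbf{Role of depth-preserving.} The hypothesis that $\select$ is depth-preserving is what makes the whole argument sound: it guarantees that \textbf{derive} is only ever applied to an atom strictly to the left of the first $\marker$, i.e.\ to an atom of maximal ancestor depth whose ancestor sequence is precisely $\contents(\AS)$. Without this condition the selected atom could lie in a deeper scope, its true ancestors would be a proper superset of $\contents(\AS)$, and the correspondence would break. I would record this as a lemma (or fold it into the invariant) stating that under a depth-preserving rule every atom selectable by \textbf{derive} has ancestor sequence equal to the current stack contents.

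\textbf{Main obstacle.} The delicate point is the \textbf{pop-derive} case --- formally justifying that when a $\marker$ becomes leftmost, the atom on top of the stack genuinely has no remaining descendents in the current goal. This requires tracking the bracketing structure of the $\marker$ marks through the derivation and showing it mirrors the parent/child structure of the proof tree: each \textbf{derive} opens a scope (pushes an atom, emits a $\marker$) and the matching \textbf{pop-derive} closes it (pops that atom). I would make this precise by an auxiliary invariant asserting a bijection between the atoms currently on $\AS$ (read top-to-bottom) and the sequence of $\marker$ marks in $G$ (read left-to-right), with the $i$-th stack atom being the parent of everything in the scope delimited by the $i$-th $\marker$. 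Once this structural correspondence is in place, both displayed cases of the proposition follow directly, the second simply by applying the first after one \textbf{pop-derive}.
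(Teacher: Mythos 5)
Your proposal is correct and follows essentially the same route as the paper's own proof: induction on the length of the ASLD derivation, case analysis on which rule (\textbf{derive}, \textbf{derive-fact}, \textbf{pop-derive}) produced the last state, and the key observation that depth-preservation forces the selected atom to lie strictly to the left of the first $\marker$ mark, so that its ancestor sequence is exactly the stack contents. The only difference is that your auxiliary stack--marker bijection invariant is stronger than what the paper explicitly maintains (the paper uses the proposition's own two-case statement as the inductive hypothesis and leaves the scope-closing argument for \textbf{pop-derive} largely implicit); this strengthening is sound, and it actually renders the delicate sub-cases --- a \textbf{derive-fact} that empties the current scope, or consecutive $\marker$ marks exposed after a pop --- more rigorous than the published argument.
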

\begin{proof}
  The proof is by induction on the length $k$ of the ASLD derivation,
  $D_S$, of the form $S_0,\ldots,S_k$ where $S_i$, for $i=0,\ldots,k$,
  is the sequence of states corresponding to each derivation step from
  the initial state $S_0=\tuple{G\gd{}\createempty}$. To simplify the
  proof, we do not make explicit distinction between rules
  \textbf{derive} and \textbf{derive-fact}.
\begin{description}
  
\item[base case ($k=1$).] Consider the initial state
  $S_0=\tuple{G\gd{}\createempty}$ where the goal $G$ is of the form
  $\leftarrow A_1,\ldots,\selat,\ldots,A_n$, $n\geq 1$.
 Initially, all
  atoms in $G$ are different from $\marker$, i.e., $A_i \neq \marker$
  for $i=1,\ldots,n$. Therefore, we can only apply rule
  \textbf{derive} to $S_0$.  Let us assume that $\select$ is a
  \emph{depth-preserving} computation rule and $\select(G)=$$\selat$.
  Let $C=H \leftarrow B_1,\ldots,B_m$ be a renamed apart clause with
  $\theta=mgu(\selat,H)$.  The test
  $Admissible(\selat,\contents(\createempty),\leq_S)$ holds
  (otherwise the derivation step is not possible).  Then, the state
  $S_1=\tuple{G'\gd{}AS'}$ is derived from $S_0$ and $C$ where
  $G'=\theta(B_1,\ldots,B_m,\marker,A_1,\ldots,\selatpar{-1},\selatpar{+1},\ldots,A_n)$
  and $AS'=\push(\createempty,\selat)$. 

  Now, we want to prove that
  $\contents(\push(\createempty,\selat))=Ancestors(B_i,D)$,
  $i=1,\ldots,m$, for the equivalent SLD derivation $D$. Hence, we
  perform the corresponding SLD step from $\leftarrow
  A_1,\ldots,\selat,\ldots,A_m$ using the same computation rule $\cR$
  and the same clause $C$.
  In $D$, we derive the goal:
  \[\theta(B_1,\ldots,B_m,A_1,\ldots,\selatpar{-1},\selatpar{+1},\ldots,A_k)\]
  By definition of ancestor (Def.~\ref{def:ancestor-rel}), $\selat$ is
  the only ancestor of $B_i$ in $D$, $i=1,\ldots,m$. Consequently,
  $\contents(\push(\createempty,\selat))=Ancestors(B_i,D)$ holds and
  our claim follows.

\item[inductive case ($k>1$).] We decompose the ASLD derivation
  $D_{S}$ of length $k$ in two parts. The first part, $D_{S-1}$, is
  the derivation from $S_0$ to $S_{k-1}$ of length $k-1$.  The second
  part corresponds to the last ASLD derivation step from $S_{k-1}$ to
  $S_{k}$.  Let $S_{k-1}=\tuple{G_{k-1}\gd{}AS_{k-1}}$ with
  $G_{k-1}=A_1,\ldots,A_n,\marker,\ldots$ and $A_i \neq \marker$ for
  $i=1,\ldots,n$. We now distinguish two cases depending on the value
  of $n$:


\begin{itemize}

\item[\bf ($n>0$):] We first apply the inductive hypothesis to the
  ASLD derivation, $D_{S-1}$, of length $k-1$ of the form
  $S_0,\ldots,S_{k-1}$.  Consider that $D'$ is the equivalent SLD
  derivation obtained by $\simplify(D_{S-1})=D'$. Now, we perform the
  last ASLD derivation step from $S_{k-1}$.  Since $ A_1\neq \marker$,
  we can only apply rule \textbf{derive} to $S_{k-1}$.
    By assumption, $\select$ is a \emph{depth-preserving} computation
    rule. Thus, it will select an atom $\selat$ from $A_1$ to $A_n$.
    In particular, assume that $\select(G_{k-1})=$$\selat$.  Let $C=H
    \leftarrow B_1,\ldots,B_m$ be a renamed apart clause with
    $\theta=mgu(\selat,H)$. We assume that the test
    $Admissible(\selat,\contents(AS_{k-1}),\leq_S)$ holds,
    otherwise the step is not possible.  Then,
    $S_k=\tuple{G_k\gd{}AS_k}$ is derived from $S_{k-1}$ and $C$ where
\[
\begin{array}{rcl}
    G_k & =
    &\theta(B_1,\ldots,B_m,\marker,A_1,\ldots,\selatpar{-1},\selatpar{+1},\ldots,A_n,\marker,\ldots)\\
    AS_k & = & \push(\AS_{k-1},\selat)
\end{array}
\]

Now, we want to prove that $\contents(AS_k)=Ancestors(B_i,D)$, for
$i=1,\ldots,m$, for the equivalent SLD derivation $D$. Hence, we
perform the corresponding SLD step from the last goal, named $Q$, in
$D'$.  We know that $Q$ is of the form $Q=A_1,\ldots,A_n,\ldots$ since
$\simplify(D_{S-1})=D'$ and all $ A_i\neq \marker$. By using the
same local computation rule for SLD resolution, the selected
atom is also $\selat$. With the same clause $C$, we derive the goal
$\theta(B_1,\ldots,B_m,
A_1,\ldots,\selatpar{-1},\selatpar{+1},\ldots,A_n, \ldots)$.  Now, by
applying Definition~\ref{def:ancestor-rel}), the ancestors of $B_i$
are $\selat$ plus the ancestors of $\selat$ in $D'$, for
$i=1,\ldots,m$.

Finally, we proceed to put together the conclusions obtained from the
two derivations.  On one hand, we have that
$\contents(AS_{k-1})=Ancestors(A_i,D')$, $i=1,\ldots,n$. In
particular, we have that $\contents(AS_{k-1})=Ancestors(A_R,D')$ for
$i=R$. Thus, we have that:
\[
\begin{array}{lcl}
AS_k & = & \push(\AS_{k-1},\selat)\\
\contents(AS_k) & = & [\selat |AS_k] =  Ancestors(B_i,D)  
\end{array}
\] 
which proves our claim.

\item[\bf ($n=0$):] In this case, the goal is of the form
  $G_{k-1}=\marker,C_1,C_2,\ldots$. By the inductive hypothesis, we
  know that the atom on the top of $AS_k$ has no descendents in
  $curr\_goal(D_{S-1})$ and
  $\contents(\pop(AS_{k-1}))=Ancestors(C_i,D')$ for $C_i \neq \marker
  \mbox{ for } i=1,\ldots,n$.  Now, the only possibility is that
  $S_k=\tuple{G_k\gd{}AS_k}$ is \emph{pop-derived} from $S_{k-1}$ with
  $G_k= C_1,C_2\ldots$ and $AS_k=\pop(\AS_{k-1})$.  Therefore, we have
  that $\contents(AS_k)=\contents(\pop(AS_k))=Ancestors(C_i,D')$.
  Finally, in the equivalent SLD derivation step $D$ from $D'$, no
  step is performed as $\simplify$ removes the corresponding atom
  (i.e., the $\marker$ mark).  Hence,
  $Ancestors(C_i,D)=Ancestors(C_i,D')$ and the result holds.

\end{itemize}
\end{description}
\end{proof}
The above result trivially holds for leftmost unfolding which is
always depth-preserving.
The next theorem guarantees that we do not lose any specialization
opportunities by using our stack-based implementation for ancestors
instead of the more complex tree-based implementation, i.e., our
proposed semantics will not stop ``too early''. It is a consequence of
the above proposition and the results in \cite{BSM92} about wqo.

\begin{theorem}[accuracy]
\label{theo:accuracy}
Let $D$ be an SLD derivation for query $G$ in a program $P$
via a local computation rule. Let $\leq_S$ be
a wqo. If the derivation $D$ is safe w.r.t.\ 
$\leq_S$ then there exists an ASLD derivation $D_S$ for $G$ and
$P$ via
 \short{ via the equivalent to $\select$ } 
a depth-preserving
computation rule such that $\simplify(D_S)=D$.
\end{theorem}
\begin{proof}
  The proof is by contradiction.  We consider the \emph{safe} SLD derivation
  $D$ of length $k$ for $G$ via a local computation rule $\cR$.
  Trivially, the partial derivation $D'$ of length $k-1$ from $G$ to a
  goal $G'$ is safe.

  Now, the assumption is that, $D_S$, the ASLD derivation for
  $S=\tuple{G \gd{} \createempty}$ corresponding to $D$ is \emph{not}
  safe.  In particular, we consider the partial ASLD derivation,
  $D_S'$, from the state $S$ to the state $S'$, such that
  $\simplify(D_S')=D'$ and, from which a further ASLD derivation step
  for $S'$ is not safe, i.e., it would result in an inadmissible
  derivation.
  The state $S'$ is of the form $S'=\tuple{G'\gd{} AS'}$ with
  $G'=A_1,\ldots,A_n,\marker,\ldots$ and $A_i \neq \marker$, for
  $i=1,\ldots,n$. By Definition~\ref{def:depth-pres}, the
  depth-preserving computation rule can only select an atom $A_i$, for
  $i=1,\ldots,n$.
  
  Since a safe derivation step from $S'$ cannot be performed, the
  truth value of the expression: \[Admissible(A_i,contents(AS'),\leq_S)\]
  is false for any selected atom $A_i$, $i=1,\ldots,n$.  By
  Definition~\ref{def:adm}, this means that $\forall A_i,\; \exists
  B\in contents(AS'): B \leq_S A_i$. By applying
  Proposition~\ref{propo:anc-stack}, we have that the truth value of
  $Admissible(A_i,Ancestors(A_i,D'),\leq_S)$ is false as well.
  Therefore, $\forall A_i,\; \exists B\in Ancestors(A_i,D'): B \leq_S
  A_i$.
  
  Finally, since $\simplify(D_S')=D'$ and all atoms $A_i\neq \marker$,
  $G'$ is a goal of the form $A_1,\ldots,A_n,\ldots$ The equivalent
 computation rule, $\cR$, can select the same atoms $A_i$.
  However, $Admissible(A_i,Ancestors(A_i,D'),\leq_S)$ is false for all
  $A_i$, for $i=1,\ldots,n$.  Thus, the last derivation step in $D$ is
  inadmissible, hence, we have a contradiction.
\end{proof}
Note that since our semantics disables performing any further steps as
soon as inadmissible sequences are detected, not all local SLD
derivations have a corresponding ASLD derivation. However, if a local
SLD derivation is safe, then its corresponding ASLD derivation can be
found.

It is interesting to note that we can allow more flexible computation
rules which are not necessarily depth-preserving while still ensuring
termination.  For instance, consider a state
$\tuple{A_1,\ldots,A_n,\marker,\selat,\ldots\gd{} P}$ with $\marker
\notin \{A_1,\ldots,A_n\}$ and a non depth-preserving computation rule
which selects the atom $\selat$ to the right of the $\marker$ mark.
Then, rule \emph{derive} will check admissibility of $\selat$ w.r.t.\
all atoms in the stack $P$.  However, the topmost atom of $P$, say
$P_1$, is an ancestor only of the atoms $A_i$ to the left of $\selat$
but it is not an ancestor of $\selat$. The more $\marker$ marks the
computation rule jumps over to select an atom, the more atoms which do
not belong to the ancestors of the selected atom that will be in the stack,
thus, the more accuracy and efficiency we lose.  In any case, the
stack will always be an over-approximation of the actual set of
ancestors of $\selat$.

Our local unfolding rule based on ancestor stacks can be used within
any partial deduction framework, including Conjunctive Partial
Deduction (CPD) \cite{CPD:megapaper}. In principle, its use within the
CPD framework does not pose any particular difficulty and our
unfolding rule can simply be incorporated as any other strategy within
the method. Indeed, the main distinction of CPD w.r.t.\ non
conjunctive methods is on the use of an enhanced global control which
generates a set of conjunctions rather than individual atoms, while
any of the existing local control strategies can be used in
combination with such a global control.  The only requirement is that
the unfolding rule takes as input a conjunction of atoms rather than a
single atom, which is always a trivial extension. It should be noted
that some CPD examples may require the use of an unfolding rule which
is not depth-preserving to obtain the optimal specialization. As we
discuss above, we cannot ensure accuracy results (though we still have
correctness) in these cases but in turn the use of local unfolding
will improve the efficiency of the partial deduction process, as our
experimental results will show later.

\short{
\subsection{Depth-preserving computation rules} \label{sec:pop-vs-depth}

It is easy to see that for each ASLD derivation $D_S$ there is a
corresponding SLD derivation $D$ with the same computed answer
and the same goal without the $\marker$~atoms.
Such SLD derivation is the one obtained by performing the same
\emph{derive} steps (with exactly the same clauses) using the same
computation rule and by ignoring the \emph{pop-derive} steps since
goals in SLD resolution do not contain $\marker$ atoms.  We will
use $\simplify(D_S)=D$ to denote that $D$ is the SLD derivation which
corresponds to $D_S$.

We would now like to impose a condition on the computation rule which
allows ensuring that the contents of the stack are precisely the
ancestors of the atom to be selected. 

\short{The following notion of \emph{depth-preserving} computation
  rule allows precisely this.}

\begin{definition}[depth-preserving]
  A computation rule $\select$ is \emph{depth-preserving} if for all
  goal $G=\;\leftarrow A_1,\ldots,A_n,\marker,\ldots$ such that $A_i\neq
  \marker$, $i=1,\ldots,n$, $\select(G) \in \{A_1,\ldots,A_n\}$.
\end{definition}
Intuitively, a depth-preserving computation rule always returns an
atom which is strictly to the left of the first (leftmost) $\marker$
mark. Note that $\marker$ is used to separate groups of atoms which
are at different depth in the proof tree. Thus, the notion of
depth-preserving computation rules in ASLD resolution is
\emph{equivalent} to that of local computation rules in SLD
resolution.

\begin{proposition}[ancestor stack]
\label{propo:anc-stack}
Let $D_S$ be an ASLD derivation for the initial query $G$ in program $P$
via a \emph{depth-preserving} computation rule. Let $D$ be an SLD
derivation \short{using an equivalent local computation rule} such
that $\simplify(D_S)=D$.  Let
$curr\_goal(D_S)=A_1,\ldots,A_n,\marker,\ldots$ with $A_i \neq
\marker$ for $i=1,\ldots,n$.  Let
$curr\_ancestors(D_S)=\AS$. 
Then,  $\contents(\AS)=Ancestors(A_i,D)$ for $i=1,\ldots,n$.
\end{proposition}
\short{The above result trivially holds for leftmost unfolding which is
always depth-preserving.
}
The next theorem guarantees that we do not lose any specialization
opportunities by using our stack-based implementation for ancestors
instead of the more complex tree-based implementation, i.e., our
proposed semantics will not stop ``too early''. It is a consequence of
the above proposition and the results in \cite{BSM92}.

\begin{theorem}[accuracy]
\label{theo:accuracy}
Let $D$ be an SLD derivation for query $G$ in a program $P$
via a local computation rule. Let $\leq_S$ be
a wqo. If the derivation $D$ is safe w.r.t.\ 
$\leq_S$ then there exists an ASLD derivation $D_S$ for $G$ and
$P$ via
 \short{ via the equivalent to $\select$ } 
a depth-preserving
computation rule such that $\simplify(D_S)=D$.
\end{theorem}
Note that since our semantics disables performing any further steps as
soon as inadmissible sequences are detected, not all local SLD
derivations have a corresponding ASLD derivation. However, if a local
SLD derivation is safe, then its corresponding $D_S$ derivation can be
found.

It is interesting to note that we can allow more flexible computation
rules which are not necessarily depth-preserving while still ensuring
termination.  For instance, consider the state
$\tuple{A_1,\ldots,A_n,\marker,\selat,\ldots\gd{} [P_1|P]}$ and a not
depth-preserving computation rule which selects the atom $\selat$ to
the right of the $\marker$ mark.
\short{, i.e., an atom which does not have
maximal depth length.}
Then, rule \emph{derive} will check admissibility of $\selat$ w.r.t.\ 
all atoms in the stack $[P_1|P]$.  However, the topmost atom $P_1$ is
an ancestor only of the atoms $A_i$ to the left of $\selat$ but it is
not an ancestor of $\selat$. The more $\marker$ marks the computation
rule jumps over to select an atom, the more atoms which do not belong
to the ancestors of the selected atom will be in the stack, thus, the
more accuracy and efficiency we lose.  In any case, the stack will
always be an over-approximation of the actual set of ancestors of
$\selat$.
On the other hand, let us note that no \emph{derive} step can be
performed if a $\marker$ mark is selected by the computation rule,
because there is no matching clause for $\marker$. This ensures that
no atom is ever popped from the ancestor stack before it should, i.e.,
before the execution reaches a leftmost $\marker$ mark.

\begin{proposition}
\label{propo:selected-atoms}
Let $D_S$ be an ASLD derivation. 
\short{ for initial query $G$ in program $P$.}
%
Let $D$ be an SLD derivation such that $\simplify(D_S)=D$.  Let
$curr\_goal(D_S)=A_1,\ldots,A_n,\marker,\ldots$ with $A_i \neq
\marker$ for $i=1,\ldots,n$. Let $curr\_ancestors(D_S)=\AS$.  Then,
for all $i=1,..,n$, it holds that $\contents(\AS)\supseteq
Ancestors(A_i,D)$.
\end{proposition}
For computation rules which are not depth-preserving, the fact that
the stack is an over-approximation of the ancestors ensures the
termination of the unfolding process, although we do no longer
preserve accuracy results.  \short{,as in Th.~\ref{theo:accuracy}.}
The next theorem guarantees termination at the local level during partial deduction.
It is a consequence of Prop.~\ref{propo:selected-atoms} and the
termination results in \cite{BSM92}.

\begin{theorem}[termination]
\label{theo:termination}
Let $G$ be a query, $P$ a program, and $\leq_S$ a wqo. Then the ASLD
tree for $G$ via any computation rule in $P$ using $\leq_S$ is finite.
\end{theorem}
}

\secbeg
\section{Assertion-based Unfolding  for External Predicates}
\label{sec:an-unfolding-rule}
\secend

Most of real-life Prolog programs use predicates which are not defined
in the program (module) being developed. We will refer to such
predicates as \emph{external}. Examples of external predicates are (1) the
traditional ``built-in'' predicates such as arithmetic operations
(e.g., \texttt{is/2}, \texttt{<}, \texttt{=<}, etc.) and basic
input/output facilities; (2) those predicates
 defined in a different module, (3) predicates written in another
language, etc.  This section deals with the difficulties that such
\emph{external} predicates pose during partial deduction and extends
our ASLD semantics to deal with them.

\subsection{The notion of evaluable atom}

When an atom $A$, such that $pred(A)=p/n$ is an external predicate, is
selected during partial deduction, it is not possible to apply the \emph{derive} rule
in Definition~\ref{def:der-step} due to several reasons. First, we may
not have the code defining $p/n$ and, even if we have it, the
derivation step may introduce in the residual program calls to
predicates which are private to the module $M$ where $p/n$ is defined.
In spite of this, if the executable code for the external predicate
$p/n$ is available, and under certain conditions, it can be possible
to fully evaluate calls to external predicates at specialization time.
We use $\exec(Sys,M,A)$ to denote the execution of atom $A$ on a logic
programming system $Sys$ (e.g., \ciao\ or SICStus) in which the module $M$,
where the external predicate $p/n$ is defined, has been loaded. In the
case of logic programs, $\exec(Sys,M,A)$ can return zero, one, or
several computed answers for $M\cup A$ and then execution can either
terminate or loop. We will use substitution sequences~\cite{ai-seq} to
represent the outcome of the execution of external predicates. A
\emph{substitution sequence} is either a finite sequence of the form
$\tuple{\theta_1,\ldots,\theta_n}$, $n\geq 0$, or an incomplete
sequence of the form $\tuple{\theta_1,\ldots,\theta_n,\bot}$, $n\geq
0$, or an infinite sequence $\tuple{\theta_1,\ldots,\theta_i,\ldots}$,
$i\in \N^*$, where ${\N}^*$ is the set of positive natural
numbers and $\bot$ indicates that the execution loops. We say that an
execution \emph{universally terminates} if
$\exec(Sys,M,A)=\tuple{\theta_1,\ldots,\theta_n}$, $n\geq 0$.

In addition to producing substitution sequences, it can be the case
that the execution of atoms for (external) predicates produces other
outcomes such as side-effects, errors, and exceptions. Note that this
precludes the evaluation of such atoms to be performed at partial
evaluation time, 
since those effects need to be performed at run-time. A clear example
of this are input/output facilities. In order to
capture the requirements which allow executing external predicates at
partial deduction time we now introduce the notion of \emph{evaluable} atom:

\begin{definition}[evaluable]
\label{def:evaluable} 
Let $A$ be an atom
  such that $pred(A)=p/n$ is an external predicate defined in module
  $M$. We say that $A$ is \emph{evaluable} in a logic programming
  system $Sys$ if $\exec(Sys,M,A)$
  satisfies the following conditions:
\begin{enumerate}
\item it universally terminates
\item it does not produce side-effects
\item it does not issue errors
\item it does not generate exceptions
\end{enumerate}
We also say that an expression $E$ is evaluable if 1) $E$ is an
evaluable atom, or 2) $E$ is a conjunction of evaluable expressions,
or 3) $E$ is a disjunction of evaluable expressions.
\end{definition}
Clearly, some of the above properties are not computable (e.g.,
termination is undecidable in the general case).
However, it is often possible to determine some \emph{sufficient
  conditions} ($SC$) which are \emph{decidable} and ensure that, if an
atom $A$ satisfies such conditions, then $A$ is evaluable.
Intuitively, a sufficient condition can be thought of as a traditional
precondition which ensures a certain behavior of the execution of a
procedure
provided they are satisfied.
Then, if this process is applied to a call corresponding to an
external predicate which is selected during partial deduction, then
that call can be executed directly at partial deduction time.
To formalize this, we propose to use the notion of \emph{evaluable
  assertion}. Basically, an evaluable assertion is a pair containing a
predicate descriptor and the sufficient conditions for its instances
to be evaluable.

\begin{definition}[correct evaluable assertion]\label{def:eval}
  Let $p/n$ be an external predicate defined in module $M$. An
  evaluable assertion $\eval{p(X1,...,Xn)}{\emph{SC}}$ is correct for
  predicate $p/n$ in a logic programming system $Sys$ if, $\forall
  \theta$:
  \begin{itemize}
  \item the expression $\theta(SC)$ is evaluable, and
  \item if $\exec(Sys,M,\theta(SC))=\tuple{\emptysubs}$ then
    $\theta(p(X1,...,Xn))$ is evaluable.
  \end{itemize}

\end{definition}
In principle, assertions have to be provided manually by the supplier
of the (external) code. However, for predicates that are defined in
the source language and use only external predicates for which those
assertions are available, existing analysis tools (like those within
the \ciaopp\ system\footnote{In this system, evaluable assertions are
  called $\sf eval$
assertions.}) are able to infer them in many practical cases
(see \cite{nonleftmost-lopstr05}), as we will discuss later.

One of the advantages of using this kind of assertion is that it makes
it possible to deal with new external predicates (e.g., written in
other languages) in user programs or in the system libraries without
having to modify the partial evaluator itself. Also, the fact that the
assertions are co-located with the actual code defining the external
predicate, i.e., in the module $M$ (as opposed to being in a large
table inside the partial deduction system) makes it more difficult for the assertion
to be left out of sync when a modification is made to the external
predicate.  We believe this to be very important to the
maintainability of a real application or system library.

\begin{example}\label{ex:assertion}
Let us consider the following assertion for  the builtin predicate
  $\leq$:
\[\tt \eval{A =< B}{(arithexpr(A),arithexpr(B))}\]
which states that if predicate \verb+=</2+ is called with both arguments
instantiated to a term of type {\tt arithexpr}, then the call is
evaluable in the sense of Definition~\ref{def:evaluable}.
In our implementation, we use the ``\emph{computational}
  assertions'' which are part of the assertion
  language~\cite{assert-lang-disciplbook-short} of \ciaopp, the Ciao
  system preprocessor~\cite{ciaopp-sas03-journal-scp}, in order to
  declare evaluable assertions.

  The type {\tt arithexpr} corresponds to arithmetic
  expressions which, as expected, are built out of numbers and the
  usual arithmetic operators. In our implementation in Ciao, the type
  {\tt arithexpr} is expressed as a unary regular logic program. This
  allows using the underlying Ciao system in order to effectively
  decide whether a term is an {\tt arithexpr} or not.
\end{example}

\subsection{The extension of ASLD resolution}

The following definition extends our ASLD semantics by providing a new
rule, \textbf{external-derive}, for evaluating calls to external
predicates.  Given a sequence of substitutions
$\tuple{\theta_1,\ldots,\theta_n}$, we define
$Subst(\tuple{\theta_1,\ldots,\theta_n})=\{\theta_1,\ldots,\theta_n\}$.

\begin{definition}[external-derive]
 \label{def:builtin-derive} Let $Sys$ be a logic programming system.
 Let $G=\;\leftarrow A_1,\ldots,\selat,\ldots,A_k$ be a goal. Let
 $S=\tuple{G\gd{}\AS}$ be a state and $\AS$ a stack.  Let $\select$ be
 a computation rule such that $\select(G)=$$\selat$ with
 $pred(\selat)=p/n$ an external predicate from module $M$. Let $C$ be
 an evaluable assertion 
$\eval{p(X1,...,Xn)}{SC}.$
 Then, $S'=\tuple{G'\gd{}AS'}$ is \emph{ex\-ternal-derived} from $S$ and
 $C$ via $\select$ in $Sys$ if: 
  \begin{eqnarray*}
   \sigma=mgu(\selat,p(X1,...,Xn))   \\
   \exec(Sys,M,\sigma(SC))=\tuple{\emptysubs} \\
  \theta\in Subst(\exec(Sys,M,\selat)) \\
G'  \mbox{ is the goal }
  \theta(A_1,\ldots,\selatpar{-1},\selatpar{+1},\ldots,A_k) \\
 AS'=\AS 
 \end{eqnarray*}
\end{definition}
Notice that, since after computing $\exec(Sys,M,\selat)$ the
computation of $\selat$ is finished,
there is no need to push (a copy of) $\selat$ into \AS\ and the
ancestor stack is not modified by the {\bf external-derive} rule.  This
rule can be nondeterministic if the substitution sequence for the
selected atom $\selat$ contains more than one element, i.e., the
execution of external predicates is not restricted to atoms which are
deterministic.
The fact that $\selat$ is evaluable implies universal termination.
This in turn guarantees that in any ASLD tree, given a node $S$ in
which an external atom has been selected for further resolution, only
a finite number of descendants exist for $S$ and they can be obtained
in finite time.

\begin{example}
  Consider the \ciao\ system with the assertion in
  Example~\ref{ex:assertion} for \verb+1=<1+. Consider also the atoms
  {\bf 5} and {\bf 7}, which are of the form \verb+1=<1+, in the ASLD
  derivation of Figure~\ref{fig:qsort-emb}.  Both atoms can be
  evaluated because
  \[\exec(ciao,arithmetic,(arithexpr(1),arithexpr(1)))=\tuple{\emptysubs}\]
  This is a sufficient condition for $\exec(ciao,arithmetic,(1=<1))$
  to be evaluable. Its execution returns
  $\exec(ciao,arithmetic,(1=<1))=\tuple{\emptysubs}$.
\end{example}
In addition to the conditions discussed above which allow evaluating
atoms for external predicates at specialization time, an orthogonal
issue is that of the correctness of non-leftmost unfolding in the
presence of external predicates.
%
For logic programs without impure predicates, non-leftmost unfolding
is sound thanks to the independence of the computation rule (see for
example~\cite{Lloyd87}).\footnote{However, non-deterministic unfolding
  of non-leftmost atoms can degrade efficiency.}  Unfortunately,
non-leftmost unfolding poses several problems in the context of
\emph{full} Prolog programs with \emph{impure} predicates, where such
independence does not hold anymore.
For instance, \texttt{ground/1} is an \emph{impure}  
predicate since, under LD resolution, the goal \texttt{ground(X),X=a}
fails whereas \texttt{X=a,ground(X)} succeeds with computed answer
\emph{X/a}.  Those executions are not equivalent and, thus, the
independence of the computation rule does no longer hold.
As a result, given the goal \texttt{$\leftarrow$ ground(X),X=a}, if we
allow the non-leftmost unfolding step which binds the variable
\texttt{X} in the call to \texttt{ground(X)}, the goal will succeed at
specialization time, whereas the initial goal fails in LD resolution
at run-time.  The above problem was early detected \cite{Sahlin93:ngc}
and it is known as the problem of \emph{backpropagation of bindings}.
Also \emph{backpropagation of failure} is problematic in the presence
of impure predicates.
For instance, \texttt{$\leftarrow$ write(hello),fail} behaves
differently from \texttt{$\leftarrow$ fail,write(hello)}.


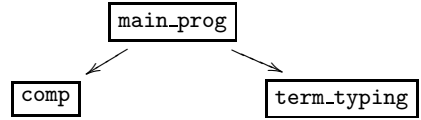
\begin{figure}[t]
\begin{minipage}{7cm}
\begin{verbatim}
:- module(main_prog,[main/2],[]).
:- use_module(comp,[long_comp/2],[]).

main(X,Y)   :- problem(X,Y), q(X).

problem(a,Y):- ground(Y),long_comp(c,Y).
problem(b,Y):- ground(Y),long_comp(d,Y).

q(a).
\end{verbatim}
\end{minipage}
\begin{minipage}{7cm}
$\xymatrix@!R=5pt@C=5pt{
& \fbox{\tt main\_prog} \ar[dl] \ar[dr]  \\
\fbox{\tt comp}  & & \fbox{\tt term\_typing} 
}$
\end{minipage}
  \caption{Motivating Example}
  \label{fig:ex_det}
\secbeg
\secbeg
\end{figure}

In order to illustrate the problem, consider the \ciao\ program in
Fig.~\ref{fig:ex_det}, which uses the impure predicate {\tt ground/1}
and whose modular structure appears to the right.  {\tt term\_typing}
is the name of the module in \ciao\ where {\tt ground/1} is defined
and predicate {\tt long\_comp/2} is imported from the user module {\tt
  comp}.  Consider a deterministic unfolding rule and the entry
declaration: ``\texttt{:- entry main(X,a).}''. The unfolding rule
performs an initial step and derives the goal {\tt
  problem(X,a),q(X)}. Then, it cannot select the leftmost atom {\tt
  problem(X,a)} because its execution performs a non deterministic
step.
In this situation, different decisions can be taken. a) We can stop
unfolding at this point. However, in general, it may be profitable to
unfold atoms other than the leftmost. Interesting computation rules
are able to detect the above circumstances and
``jump over'' the problematic atom 
in order to proceed with the specialization of another atom (in this
case {\tt q(X)}). We can then decide to b) unfold {\tt q(X)} but
avoiding backpropagating bindings or failure onto {\tt problem(X,a)}.
And the final possibility c) is to unfold {\tt q(X)} while allowing
backpropagation onto \texttt{problem(X,a)}. 
However, this will require that some additional requirements hold on
the atom(s) to the left of the selected one.

There are several solutions in the literature (see,
e.g.,\cite{Leuschel:LOPSTR94,EGM97,ElviraHanusVidal02JFLP,LeuschelBruynooghe:TPLP02,LJVB04})
which allow unfolding non-leftmost atoms by avoiding the
backpropagation of bindings and failure, i.e., in the spirit of
possibility b).
Basically, the common idea is to represent explicitly the bindings by
using unification \cite{Leuschel:LOPSTR94} or residual case
expressions \cite{ElviraHanusVidal02JFLP} rather than backpropagating
them (and thus applying them onto leftmost atoms).  For our example,
by using unification, we can unfold {\tt q(X)} and obtain the
resultant {\tt main(X,a):-problem(X,a),X=a}.  This guarantees that the
resulting program is correct, but it definitely introduces some
inaccuracy, since bindings (and failure) generated during unfolding of
non-leftmost atoms are hidden from atoms to the left of the selected
one. The relevant point to note is that preventing backpropagation, by
using one of the existing methods, can be a bad idea for at least the
following reasons:
\begin{enumerate}
\item \emph{Backpropagation of bindings and failure can lead to an
    early detection of failure}, which may result in important
  speedups. For instance, if we allow backpropagating the binding {\tt
    X=a} to the left atom, we get rid of the whole (failing)
  computation for {\tt problem(b,a)} in the residual program.
\item \emph{Backpropagation of bindings can make the profitability
    criterion for the leftmost atom to hold}, which may result in more
  aggressive unfolding. In the example, by backpropagating, we obtain
  the atom {\tt problem(a,a)} which allows a deterministic computation
  rule to proceed to its unfolding.
\item \emph{Backpropagation of bindings may allow better indexing}
  by further instantiating arguments in clause heads. This is often
  good from a performance point of view (see, e.g.,
  \cite{VenkenDemoen88}).  In our example, we will obtain the clause
  head {\tt main(a,a)} with better indexing than {\tt main(X,a)}.
\end{enumerate}
The bottom line is that backpropagation should be avoided only when it
is really necessary since interesting specializations can no longer be
achieved when it is disabled.

The problems involved in and some possible solutions to non-leftmost
unfolding can be found in the
literature~\cite{Leuschel:LOPSTR94,EGM97,ElviraHanusVidal02JFLP,LeuschelBruynooghe:TPLP02}.
However, there is still ample room for improvements. In particular,
the intensive use of static analysis techniques in this
assertion-based context seems particularly promising. We are
investigating the use of the analyzers available in \ciaopp\ with this
aim, though this is outside the scope of this article.


\subsection{Handling of meta-predicates}
\label{sec:meta-preds}

\begin{figure}[t]
  \centering
  \begin{lstlisting}{}
:- module(_,[p/2]).
:- use_package(library(assertions)).

p(Y,L):- findall(X,property(X),L), \+ r(Y).

:- trust pred property/1 + (eval,sideff(free)).
property(X):- q(X), \+ r(X).

q(a).  q(b).

:- trust pred r/1 + (eval,sideff(free)).
r(b).
\end{lstlisting}

  \caption{Program with meta-calls}
  \label{fig:metapreds}
\end{figure}

Though not introduced in the formalization for simplicity, our partial
evaluator can handle the usual Prolog \emph{meta-predicates}, such as
\texttt{call/1}, \texttt{findall/3}, \texttt{bagof/3}, and
\texttt{setof/3}.
Meta-predicates are characterized by receiving one or more atoms as
input. For example, \texttt{call/1} receives an atom as its only input
and \texttt{findall/3} receives a goal in its second argument
position. 
The simplest possible handling of meta-predicates consists in
residualizing all meta-calls, i.e., all calls to meta-predicates, and
transferring the atoms which appear as arguments in such meta-calls to
the global control for their subsequent partial evaluation.
For this, all meta-predicates must be declared as such and the
arguments which contain atoms must be known in advance. In the case of
\ciao\ this is done using assertions.

As a further optimization, when the atoms which appear in meta-calls
are evaluable, then rather than residualizing the meta-call, our
partial evaluator evaluates both the atom itself and also the call to
the meta predicate. This is an important optimization because partial
evaluation loses a lot of precision when unfolding is stopped and
atoms are transferred to the global control.

Another important feature of Prolog programs is negation as failure,
i.e., the \texttt{${\tt\backslash}$+/1} meta predicate.
In order to preserve the semantics of negation as failure, evaluation
of a meta-call of the form \verb- \+ A- 
requires \texttt{A} to be ground.
Therefore, at partial evaluation time a meta-call \verb- \+ A- is only
evaluated if both \texttt{A} is evaluable and ground. If this is not
the case, the meta-call is residualized and \texttt{A} is
transfered to the global control. This allows a relatively simple
handling of negation as failure where \texttt{${\tt\backslash}$+/1} is
considered as a meta predicate with the additional evaluation
requirement that its associated atom is ground.

\begin{figure}[t]
  \centering
  \begin{lstlisting}
:- module(_, [p/2] ).
:- use_package(library(assertions)).

p(A,[a]) :- \+r_1(A) .

:- trust pred r_1(_1) + (eval, sideff(free)).
r_1(b).
  \end{lstlisting}
  \caption{Partially evaluated program with meta-calls}
  \label{fig:meta-result}
\end{figure}

\begin{example}
  Figure~\ref{fig:metapreds} shows an example \ciao\ program
  containing calls to the \texttt{findall/3} meta-predicate and
  negation as failure. The \texttt{trust} assertions in \ciao\ syntax
  inform the partial evaluator that all calls to the
  \texttt{property/1} and \texttt{r/1} predicates are evaluable.

  As a result, the \texttt{findall(X,property(X),L)} meta-call is
  evaluable and can be replaced by the unification
  \texttt{L=[a]}. However, the second meta-call, i.e., \verb-\+ r(Y)-
  is residualized since \texttt{Y} remains a variable at partial
  evaluation time. The resulting program obtained by our partial
  evaluator is shown in Figure~\ref{fig:meta-result}. Since partially
  evaluated atoms are renamed, the specialized version of
  \texttt{r(Y)} has been renamed to \texttt{r\_1(Y)}. The atom
  \texttt{p(A,B)} keeps its original name since it is an exported
  predicate, in order to preserve the module interface.

\end{example}

\secbeg
\section{Experimental Results}
\label{sec:experiments}
\secend

We have implemented in our partial evaluation system the unfolding
rule we propose, 
together with other variations in order to evaluate the efficiency of
our proposal. Our partial evaluation system has been integrated in a practical state
of the art compiler which uses global analysis extensively: the \ciao\ 
compiler and, specifically, its preprocessor
\ciaopp~\cite{ciaopp-sas03-journal-scp}.
For the tests, the whole system has been compiled using Ciao
1.13~\cite{ciao-reference-manual-1.13-shorter}.
All of our experiments have been performed on 
an Intel Core 2 Quad Q9300 at 2.5GHz with 1.95GB of RAM, running
Linux 2.6.28-15. 

The programs used as benchmarks are indicated in the \textbf{Bench}
column.  They are classical programs often used as benchmarks for analysis
and partial evaluation of logic programs. They are described in more
detail below.
Since our proposal improves the performance of the unfolding process,
i.e., the local control, we have chosen as benchmarks programs whose
partial evaluation performs plenty of unfolding, since this allows
observing the benefits of our proposal better. In particular, three of
the benchmarks considered: \texttt{advisor3}, \texttt{query}, and
\texttt{zebra} can be fully unfolded using homeomorphic embedding with
ancestors. In the rest of the programs we provide initial queries which
are partially instantiated in order to show that our partial
evaluation system also includes global control and can partially
evaluate programs whose input data is not fully instantiated. 
Our global control is also based on homeomorphic embedding. When a new
atom is going to be specialized, we first check whether it embeds any
of the previously specialized atoms. In that case, the new atom is
generalized before being specialized by using the most specific
generalization of the new and the embedded atom. Otherwise, the new
atom is specialized as is.
For our
experiments, we use as input lists whose first part is
instantiated to integers and then the rest of the list is unknown,
i.e., just a variable, at partial evaluation time.
In the tables, we add to the name of the benchmark the number of
elements in the input list which are instantiated. For example,
\texttt{nrev\_80} should 
be interpreted as the well-known naive reverse program together with a
query which has as input a list of the form $[1,\ldots,80|T]$, with
$T$ a free variable.

The \texttt{advisor3} program is a variation of the advisor program in
the DPPD~\cite{Leuschel96:ecce-dppd} library. The \texttt{query} and
\texttt{zebra} programs are classical benchmarks for program
analysis. In particular, \texttt{query} performs a query to a small
Prolog database and \texttt{zebra} implements a simple logical puzzle.
Program \texttt{qsort}
corresponds to the quick-sort program shown in the article. The part
of the list which is instantiated is not ordered.
The \texttt{rev} benchmark is another list reversal program, but now with linear
complexity, using an accumulator. 
Finally, \texttt{permute} is a
permutation program which uses a nondeterministic deletion
predicate. 
%
Note that two of the programs (\texttt{nrev} and \texttt{qsort}) are
partially evaluated w.r.t.\ two different input lists. The smaller of
the two corresponds to the largest possible partially instantiated
list that the partial evaluator can handle using the Relation
implementation explained below, without running out of memory.
Importantly, none of \texttt{advisor3}, \texttt{query}, nor
\texttt{zebra} can be fully unfolded using homeomorphic embedding over
the full sequence of selected atoms. Also, \texttt{nrev} and, as seen
in the running example, \texttt{qsort} are potentially not fully
unfolded if the input lists contain repetitions unless ancestors are
considered.
%


\begin{table}[t]
\centering
\begin{tabular}{lrrrrrrrrrrrr}\hline\hline
\multicolumn{1}{c}{}& 
\multicolumn{2}{c}{\textbf{Relation}}&
\multicolumn{2}{c}{\textbf{Tree}}
&
\multicolumn{2}{c}{\textbf{Stack}}&
 \multicolumn{2}{c}{\textbf{Rel/Stack}}&
 \multicolumn{2}{c}{\textbf{Tree/Stack}}
\\\hline
\textbf{Bench}& 
{\textbf{G}} &
{\textbf{L}} &
{\textbf{G}} &
{\textbf{L}} &
{\textbf{G}} &
{\textbf{L}} &
 {\textbf{T}} &
 {\textbf{L}} &
 {\textbf{T}} &
 {\textbf{L}} 
\\ \hline \hline
advisor3 & 0 & 103 &0 &183 &0 &151 & 0.68 &0.68 &1.21 &1.21  \\ \hline
nrev\_80 & $\infty$ & $\infty$ &38 &50622 &46 &7985& $\infty$ &$\infty$ &6.31 &6.34 \\ \hline
nrev\_43 & 12 & 912 &10 &2804 &13 &774& 1.17 &1.18 &3.58 &3.62 \\ \hline
permute6 & 4 & 526 &4 &651 &9 &453 & 1.15 &1.16 &1.42 &1.44 \\ \hline
query & 0 & 92 &0 &102 &0 &86  & 1.07 &1.07 &1.19 &1.19  \\ \hline
qsort\_80 & $\infty$ & $\infty$ &15571 &430485 &15582 &47923 & $\infty$ &$\infty$ &7.02 &8.98\\ \hline
qsort\_23 & 222 & 797 &229 &1615 &213 &566&1.31 &1.41 &2.37 &2.85\\ \hline
rev\_80 & 3 & 555 &2 &581 &2 &547 & 1.02 &1.01 &1.06 &1.06 \\ \hline
zebra & 0 & 1043 &0 &1682 &0 &1052 & 0.99 &0.99 &1.60 &1.60 \\ \hline
\hline
\end{tabular}
\caption{Performance of Ancestor Stacks in Terms of Execution Time}
\label{tab:table1}
\end{table}

In the next two tables, we compare three different implementations of
unfolding based on homeomorphic embedding with ancestors:

\begin{description}
\item[Relation] We
refer to an implementation where each atom in the resolvent is
annotated with the list of atoms which are in its ancestor relation,
as done in the example in Figure~\ref{fig:qsort-emb}.
\item[Trees] This column refers to the implementation where the
  ancestor relations of the different atoms are organized in a proof
  tree.
\item[Stacks] The column \textbf{Stacks} refers to our proposed
implementation based on ancestor stacks.

\end{description}
%
%


\subsection{Execution times}

Let us explain the results in Table~\ref{tab:table1}. Times are in
milliseconds, measuring \emph{runtime}, and are computed as the
arithmetic mean of five runs.  The partial evaluation time in each
implementation is split into two columns. The first one, labeled
\textbf{G}, shows the time taken by global control. The second one,
labeled \textbf{L}, shows the time taken by local control (i.e.,
unfolding).
The benchmarks \textbf{nrev\_80} and \textbf{qsort\_80} contain the
value $\infty$ instead of a number in the \textbf{G} and
\textbf{L} columns for \textbf{Relation}
 to indicate that the partial evaluation system
has run out of memory.  For each of these two benchmarks, we have
repeated the experiment with the largest possible initial query that
\textbf{Relation} can handle in our system before running out of
memory, i.e., \textbf{nrev\_43} and \textbf{qsort\_23}. 
%
\textbf{Relation} is quite efficient in time for those benchmarks it
can handle, though a bit slower than the one based on stacks. However,
and as can be seen in Table~\ref{tab:table2}, its memory
consumption is extremely high, which makes this implementation
inadmissible in practice.
Regarding \textbf{Trees}, this implementation, based on proof
trees, has good memory consumption but it is significantly slower than
\textbf{Relation} due to the overhead of traversing the tree for
retrieving the ancestors of each atom.

The last four columns compare the relative specialization times of
\textbf{Relation} and \textbf{Trees} w.r.t.\ the \textbf{Stacks}
algorithm. It should
be observed that these three alternatives are different
implementations of the same local control strategy, and that the same
global control strategy is used in all three cases. Therefore, exactly
the same residual programs are obtained in the three cases. As the
table shows (with values greater than one), \textbf{Stacks} is faster
than \textbf{Trees} in all cases.
Furthermore, \textbf{Stacks} is even faster than the implementation based
on explicitly storing all ancestors of all atoms (\textbf{Relation})
for most programs, 
while having a memory consumption comparable to (and in fact, slightly
better than) the implementation based on proof trees.
Two speedups are shown per implementation. One, named \textbf{L},
only considers the time required for local control, and the other one,
named \textbf{T}, considers the total time of global plus local
control.
The actual speedups w.r.t.\ \textbf{Trees} range from 1.06 in the case
of \texttt{rev\_80} to 8.98 \textbf{L} (7.02 \textbf{T}) in the
case of \texttt{qsort\_80}.
This variation is due to the
different shapes which the proof trees can have for the (derivations
in the) SLD tree. In the case of \texttt{rev}, the speedup is low
since the SLD tree consists of a single derivation whose proof tree
has a single branch. Thus, in this case considering the ancestor
sequence is indeed equivalent to considering the whole sequence of
selected atoms.  But note that this only happens for binary clauses.
It is also worth noticing that the speedup achieved by the
\texttt{Stacks} implementation increases with the size of the SLD
tree, as can be seen in the two benchmarks which have been
specialized w.r.t.\ different queries. The overall resulting speedup of
our proposed unfolding rule over other existing ones is significant:
over 8 times faster than our tree-based implementation.

\subsection{Memory consumption}

\begin{table}[t]
\centering
\begin{tabular}{lrrrrrr}\hline\hline
\multicolumn{1}{c}{}&
\multicolumn{3}{c}{\textbf{Memory Consumption}}&
\multicolumn{3}{c}{\textbf{Relative Memory Reduction}}\\\hline
\textbf{Bench}& 
{\textbf{Relation}} &
{\textbf{Trees}} &
{\textbf{Stacks}} &
                   &
{\textbf{Relation}} &
{\textbf{Trees}} 
\\ \hline \hline
advisor3 & 1667260 &850612 &751112 & \hspace*{0.5cm} &  2.22 &  1.13\\ \hline
nrev\_80 & mem &1076384 &944936 & \hspace*{0.5cm} & $\infty$ &  1.14\\ \hline
nrev\_43 & 56255068 &1103980 &1041490 & \hspace*{0.5cm} &  54.01 &  1.06\\ \hline
permute\_6 & 23361920 &1959004 &1431976 & \hspace*{0.5cm} &  16.31 &  1.37\\ \hline
query & 2764368 &8064 &7520 & \hspace*{0.5cm} &  367.60 &  1.07\\ \hline
qsort\_80 & mem &5660460 &5038540 & \hspace*{0.5cm} & $\infty$ &  1.12\\ \hline
qsort\_23 & 11130584 &630048 &598212 & \hspace*{0.5cm} &  18.61 &  1.05\\ \hline
rev\_80 & 2552524 &144264 &139076 & \hspace*{0.5cm} &  18.35 &  1.04\\ \hline
zebra & 26819712 &107760 &101280 & \hspace*{0.5cm} &  264.81 &  1.06\\ \hline
\hline
 \multicolumn{5}{l}{\textbf{Overall}}& $\infty$& 1.15\\ \hline\hline
\end{tabular}
\caption{Performance of Ancestor Stacks in terms of Memory Consumption}
\label{tab:table2}
\end{table}

We have also studied the memory required by the unfolding process.
Let us briefly discuss the figures depicted in Table~\ref{tab:table2}
which represent, in number of bytes, memory consumption. It has been
measured at each derivation step during the construction of the ASLD
trees.
At each step, the resulting numbers for all memory areas (stack, heap,
etc.)  have been added and then compared to the previous maximum
value, taking always the larger of the two, thus computing the high
water mark, i.e., the maximum memory required to perform unfolding. The
figures show, for each benchmark, the high water mark minus the memory
already in use when the construction of the SLD tree was started. 
In order to make these numbers closer to the actual memory used, 
garbage collection has remained enabled during the different
experiments. In order to make memory figures comparable, we force
garbage collection just before starting partial evaluation of each
benchmark. 
%

In the last row, labeled \textbf{Overall}, we summarize the results
for the different benchmarks using a weighted mean, which places more
importance on those benchmarks with relatively larger unfolding
figures.  We use as weight for each program its actual unfolding
time/memory.  We believe that this weighted mean is more informative
than the arithmetic mean, as, for example, doubling the speed in which
a large unfolding tree is computed is more relevant than achieving
this for small trees.

As Table~\ref{tab:table2} shows, 
the \textbf{Stacks} algorithm presents lower consumption than
either of the two other algorithms studied for any of the programs. 
It can be seen that the
amount of memory required by the \textbf{Relation} algorithm precludes it from
its practical usage. Regarding the \textbf{Stacks} algorithm, not only
it is significantly faster than the implementation based on trees.
Also it provides a relatively important reduction (1.15 overall,
computed again using a weighted mean) in memory consumption over
\textbf{Trees}, which already has a good memory usage.

Altogether, when the results of Table~\ref{tab:table1} and
Table~\ref{tab:table2} are combined, they provide evidence that our
proposed techniques allow significant speedups while at the same time
requiring somewhat less memory than tree based implementations and
much better memory consumptions than implementations where the
ancestor relation is directly computed. This suggests that our
techniques are indeed effective and can contribute to making partial
evaluation a practical tool.

\subsection{Comparison with Ecce. Specialization Quality.}


\begin{table}[t]
\centering
\begin{tabular}{lrrrrrrrrrrrrr}\hline\hline
\textbf{Bench}& 
{\textbf{Ecce$_L$}} &
{\textbf{Ecce$_G$}} &
{\textbf{Orig}} &
{\textbf{Stacks}} &
{\textbf{Ecce}} &
{\textbf{O/S}} &
{\textbf{O/E}} &
{\textbf{E/S}} 
\\ \hline \hline

advisor3  &  0 &30 & 1149 & 1119   & 1042 & 1.03   & 1.10  & 0.93   \\ \hline
nrev\_80   & 19310 & 1297700 & 1005 & 30     & 64   & 33.50  & 15.70 & 2.16   \\ \hline
nrev\_43  &   2910 & 105600  &864  & 41     & 102  & 21.07  & 8.50  & 2.49   \\ \hline
permute\_6 & 40  & 20  &934  & 301    & 620  & 3.10   & 1.51  & 2.06   \\ \hline
query     & 20& 90 &  1106 & 55     & 570  & 20.11  & 1.94  & 10.32  \\ \hline
qsort\_80  & 85300&269070&  1178 & 15     & 17   & 78.53  & 70.14 & 1.11   \\ \hline
qsort\_23  & 260 & 900  & 978  & 34     & 34   & 29.12  & 29.12 & 1.00   \\ \hline
rev\_80    & 10& 730  & 1132 & 712    & 704  & 1.59   & 1.61  & 0.99   \\ \hline
zebra     & 170 & 300   & 6      & 1069 & 384.90 & 2.30  & 167.00 \\ \hline
\hline
\end{tabular}
\caption{Comparison with Ecce. Specialization Quality. }
\label{tab:tableEcce}
\end{table}

Finally, in Table~\ref{tab:tableEcce}, we want to compare our
implementation with that of a state-of-the-art partial evaluator and
see the quality of the specialized programs.  To do so, we have also
measured the time that it takes to process the same benchmarks using
Leuschel's Ecce~\cite{Leuschel96:ecce-dppd} system. For this, we have
used the compiled version available at
http://www.stups.uni-duesseldorf.de/\~{}asap/asap-online-demo/meccedownloads
and run the experiments on the same machine.  These execution times
are provided in columns
 {\textbf{Ecce$_L$}} and {\textbf{Ecce$_G$}} which show,
respectively, the  time taken by local and the global control in
Ecce. When compared with {\textbf{L}} and {\textbf{G}} in
Table~\ref{tab:table1} for the stack implementation, the results
provide evidence that our proposed stack-based implementation compares
quite well with state of the art systems as regards specialization
times. Indeed, the specialization times using our stack-based
implementation are considerably smaller for all benchmarks with high
local control times. In those benchmarks in which Ecce is faster than
the \textbf{Stacks} implementation, it is due to the unfolding rules
not being identical which results in Ecce performing fewer unfolding
steps.
Note that performing less unfolding may lead to less specialized
programs, which are often less efficient.

The next columns aim at evaluating the quality of the specialized
programs in Ecce and in our system by comparing their runtimes with
those of the original programs. We have chosen sufficiently large
input data and run the original program (column \textbf{Orig}), the
specialized one by our system (column \textbf{Stacks}) and the
specialized one by Ecce (column \textbf{Ecce}) on the same data and
the same number of times and show the aggregated runtime. The last
three columns show the speedup achieved for each benchmark. In
particular, \textbf{O/S} and \textbf{O/E} show, respectively, the
speedup of Stacks and Ecce w.r.t. the original program and
\textbf{E/S} compares Ecce against Stacks. It should be observed that
in all cases the specialized programs in both Ecce and Stacks are more
efficient than the original ones and in most cases the gain is
significant. The cases in which Stacks performs better than Ecce
(e.g., \texttt{query} and \texttt{zebra}) are because we can fully
unfold them in Stacks while Ecce stops the specialization
earlier. Hence, the gain is much larger. It is also important to
notice that in the example \texttt{advisor}, the specialization
obtained by Stacks also performs more unfolding steps than the one in
Ecce. In this case, such additional unfolding results in an unneeded
over-specialization which increases the size of the residual program
and leads to a less efficient execution.

\section{Related Work and Conclusions}\label{sec:disc-future-work}

The development of powerful unfolding rules has received considerable
attention during the last years \cite{LeuschelBruynooghe:TPLP02}.  The
most successful techniques to date are based on two fundamental ingredients:
\begin{itemize}
\item the use of a wqo which can be used to guarantee
  termination while achieving very powerful unfoldings,
\item structuring the atoms already visited in each derivation in a
  tree rather than using an unstructured collection, such as a set.
\end{itemize}
Among the  well-quasi orderings, the
\emph{homeomorphic embedding} \cite{Kru60,LeuschelBruynooghe:TPLP02}
has proved 
to be very powerful in practice. Regarding the structure to use for
visited atoms, the notion of \emph{ancestors} seems to be the best one
since it guarantees termination while allowing transformations which
are strictly more powerful than those achievable if unstructured
collections are used.

The use of ancestors for refining sequences of visited atoms was
proposed early on by \cite{BSM92} and significant effort has been
devoted to improve the implementation of
ancestors~\cite{MartensDeSchreye:jlp95b}.  However, the combination of
wqo and ancestors happens to be very inefficient in
practice. This is mainly due to the fact that dependency information
has to be maintained for the individual atoms in each derivation.  In
principle, the use of ancestors should not only allow more powerful
transformation but also speed up unfolding since it reduces the length
of sequences for which admissibility has to be checked.
Unfortunately, maintaining such information about ancestors during the
generation of SLD trees introduces a costly overhead which can
eliminate the theoretical efficiency gains.
 
In this work we have proposed ASLD resolution, a novel extension over
the SLD semantics to incorporate ancestor stacks which can be used as
a basis for the \emph{efficient} generation of (incomplete) SLD trees
during partial deduction in combination with wqo. The
main features of the implementation technique and extensions  that we
propose  for the ancestor-based local 
unfolding rule, based on ASLD resolution, are: (1)
it is parametric w.r.t.\ the wqo of interest; (2) it can
handle logic programs with builtins; (3) it is guaranteed to always
provide finite trees; (4) it is very easy to implement since the
ancestor information is simply stored using a stack; (5) it provides a
very efficient implementation of ancestor information; (6) if certain
conditions are imposed on the computation rule, then it is as accurate
as standard (more inefficient) unfolding rules based on ancestors.
%
Note that, as it is the case with unfolding rules based on traditional
SLD resolution, our semantics can be used in combination with a
determinacy check which may decide to stop unfolding even if
termination is guaranteed whenever too many alternative,
non-deterministic, branches are generated in the SLD tree.

The unfolding rule proposed in this work has been implemented in the
\ciaopp\ system \cite{ciaopp-sas03-journal-scp}, the preprocessor of
the \ciao~programming language.  Experimental results are 
promising: they provide evidence that our proposed techniques allow
significant speedups while at the same time requiring somewhat less
memory than tree-based implementations and much better memory
consumptions than implementations where the ancestor relation is
directly computed.
%
Though specialization time is obviously not as critical as execution
time, being able to perform powerful specializations in reasonable
time can only contribute to the practical takeup of partial deduction
techniques.

As for future work, we plan to incorporate in our partial evaluator 
(embedded in \ciaopp) the extensions needed to perform Conjunctive
Partial Deduction and to 
investigate whether local unfolding can be successfully used in 
this context. 
We are also investigating
additional solutions for the problems involved in non-leftmost
unfolding for programs with extra logical predicates beyond those
presented in the
literature~\cite{Leuschel:LOPSTR94,EGM97,ElviraHanusVidal02JFLP,LeuschelBruynooghe:TPLP02}.
In particular, the intensive use of static analysis techniques in this
context seems particularly promising. In our case we can take
advantage of the fact that our partial deduction system is integrated
in \ciaopp, which includes extensive program analysis facilities.
A first step in this direction has been taken in
\cite{nonleftmost-lopstr05} by using backwards analysis to infer
purity assertions which determine when a non-leftmost step is safe in
the presence of impure predicates.

\secbeg
\subsection*{Acknowledgments}
\secend

\begin{small}
  We gratefully acknowledge the anonymous referees for many useful
  comments and suggestions that helped to improve this article.  This
  work was funded in part by the Information Society Technologies
  program of the European Commission, under the Future and Emerging
  Technologies IST-231620 {\em HATS} project and the IST-215483 {\em
    SCUBE} project, by the Spanish Ministry of Science and Innovation
  under the TIN-2008-05624 {\em DOVES} and HI2008-0153 projects, and
  by the Madrid Regional Government (CM) under the S-0505/TIC/0407
  \emph{PROMESAS} project.
%
\end{small}

\end{document}